\documentclass{amsart}

\usepackage{amssymb,amsmath,amscd,amsthm}

\date{\today}

\newcommand{\Z}{{\mathbb Z}}
\newcommand{\R}{{\mathbb R}}
\newcommand{\C}{{\mathbb C}}
\newcommand{\N}{{\mathbb N}}

\newcommand{\D}{{\mathbb D}}

\newcommand{\PP}{{\mathbb P}}
\newcommand{\E}{{\mathcal E}}


\newcommand{\supp}{{\mathrm{supp}}}
\newtheorem{theorem}{Theorem}[section]
\newtheorem{lemma}[theorem]{Lemma}

\newtheorem{coro}[theorem]{Corollary}

\theoremstyle{definition}
\newtheorem{remark}[theorem]{Remark}
\theoremstyle{definition}
\newtheorem{defi}[theorem]{Definition}

\sloppy

\renewcommand{\Im}{\mathrm{Im} \, }
\renewcommand{\Re}{\mathrm{Re} \, }

\newcommand{\abs}[1]{\left\vert#1\right\vert}
\newcommand{\set}[1]{\left\{#1\right\}}

\newcommand{\norm}[1]{\left\|#1\right\|}

\begin{document}

\title[Szeg\H{o} Cocycles, Random CMV Matrices, and the Ising Model]{Uniform Hyperbolicity for Szeg\H{o} Cocycles and Applications to Random CMV Matrices and the Ising Model}

\author[D.\ Damanik]{David Damanik}

\address{Department of Mathematics, Rice University, Houston, TX~77005, USA}

\email{damanik@rice.edu}

\thanks{D.\ D.\ was supported in part by NSF grant DMS--1067988.}

\author[J.\ Fillman]{Jake Fillman}

\address{Department of Mathematics, Rice University, Houston, TX~77005, USA}

\email{jdf3@rice.edu}

\thanks{J.\ F.\ was supported in part by NSF grant DMS--1067988.}

\author[M.\ Lukic]{Milivoje Lukic}

\address{Department of Mathematics, Rice University, Houston, TX~77005, USA}

\email{milivoje.lukic@rice.edu}

\thanks{M.\ L.\ was supported in part by NSF grant DMS--1301582.}

\author[W.\ Yessen]{William Yessen}

\address{Department of Mathematics, Rice University, Houston, TX~77005, USA}

\email{yessen@rice.edu}

\thanks{W.\ Y.\ was supported by the NSF Mathematical Sciences Postdoctoral Research Fellowship DMS-1304287}

\begin{abstract}
We consider products of the matrices associated with the Szeg\H{o} recursion from the theory of orthogonal polynomials on the unit circle and show that under suitable assumptions, their norms grow exponentially in the number of factors. In the language of dynamical systems, this result expresses a uniform hyperbolicity statement. We present two applications of this result. On the one hand, we identify explicitly the almost sure spectrum of extended CMV matrices with non-negative random Verblunsky coefficients. On the other hand, we show that no Ising model in one dimension exhibits a phase transition. Also, in the case of dynamically generated interaction couplings, we describe a gap labeling theorem for the Lee-Yang zeros in the thermodynamic limit.
\end{abstract}

\maketitle

\section{Introduction}\label{s.intro}

In this paper we are interested in two seemingly unrelated problems. Namely, we study the almost sure spectrum of extended CMV matrices with random Verblunsky coefficients and the zeros of the partition function of the one-dimensional Ising model with general interaction parameters. It turns out, however, that there is a common issue in both of these scenarios. Namely, it is essential to determine for which parameters a certain one-parameter family of cocycles is uniformly hyperbolic. In other words, by solving a dynamical systems problem we are able to address questions in spectral theory and statistical mechanics.

Let us be more specific and describe the questions we will study in this paper explicitly.

There is a well known one-to-one correspondence between probability measures on the unit circle and a class of five-diagonal matrices, the so-called CMV matrices; we refer the reader to \cite{S05, S05b} for background and further information. A CMV matrix is a semi-infinite matrix of the form
$$
\mathcal{C} = \begin{pmatrix}
{}& \bar\alpha_0 & \bar\alpha_1 \rho_0 & \rho_1
\rho_0
& 0 & 0 & \dots & {} \\
{}& \rho_0 & -\bar\alpha_1 \alpha_0 & -\rho_1
\alpha_0
& 0 & 0 & \dots & {} \\
{}& 0 & \bar\alpha_2 \rho_1 & -\bar\alpha_2 \alpha_1 &
\bar\alpha_3 \rho_2 & \rho_3 \rho_2 & \dots & {} \\
{}& 0 & \rho_2 \rho_1 & -\rho_2 \alpha_1 &
-\bar\alpha_3
\alpha_2 & -\rho_3 \alpha_2 & \dots & {} \\
{}& 0 & 0 & 0 & \bar\alpha_4 \rho_3 & -\bar\alpha_4
\alpha_3
& \dots & {} \\
{}& \dots & \dots & \dots & \dots & \dots & \dots & {}
\end{pmatrix}
$$
where $\alpha_n \in \D = \{ w \in \C : |w| < 1 \}$ and $\rho_n = (1-|\alpha_n|^2)^{1/2}$ -- it is easy to check that $\mathcal{C}$ defines a unitary operator on $\ell^2(\Z_+)$.

CMV matrices $\mathcal{C}$ are in one-to-one correspondence to probability measures $\mu$ on the unit circle $\partial \D$ that are not supported by a finite set. To go from $\mathcal{C}$ to $\mu$, one invokes the spectral theorem. To go from $\mu$ to $\mathcal{C}$, one can proceed either via orthogonal polynomials or via Schur functions. In the approach via orthogonal polynomials, the $\alpha_n$'s arise as recursion coefficients for the polynomials.

Explicitly, consider the Hilbert space $L^2(\partial \D,d\mu)$ and apply the Gram-Schmidt orthonormalization procedure to the
sequence of monomials $1 , z , z^2 , z^3 , \ldots$. This yields a sequence $\varphi_0 , \varphi_1 , \varphi_2 , \varphi_3 , \ldots$ of normalized polynomials that are pairwise orthogonal in $L^2(\partial \D,d\mu)$. Corresponding to $\varphi_n$, consider the ``reflected polynomial'' $\varphi_n^*$, where the coefficients of $\varphi_n$ are conjugated and then written in reverse order. Then, we have
\begin{equation}\label{e.tmbasic}
\begin{pmatrix} \varphi_{n+1}(z) \\ \varphi_{n+1}^*(z) \end{pmatrix} = \rho_n^{-1} \left( \begin{array}{cc} z & - \bar{\alpha}_n \\ - \alpha_n z& 1 \end{array} \right) \begin{pmatrix} \varphi_{n}(z) \\ \varphi_{n}^*(z) \end{pmatrix}
\end{equation}
for suitably chosen $\alpha_n \in \D$ (and again with $\rho_n = (1-|\alpha_n|^2)^{1/2}$). With these $\alpha_n$'s, one may form the corresponding CMV matrix $\mathcal{C}$ as above and obtain a unitary matrix for which the spectral measure corresponding to the cyclic vector $\delta_0$ is indeed the measure $\mu$ upon which we based the construction.

Depending on whether one starts out with the coefficients or the measure in this one-to-one correspondence, one obtains direct and inverse spectral theory in this setting. Given the coefficients in the direct spectral problem, one of the very first questions to answer is the determination of the (essential) support of the associated measure. An important class of coefficients is given by those that are random in the sense that each $\alpha_n$ is drawn according to some distribution, and these drawings are independent and use the same distribution for each $n$.

Such random sequences are special cases of so-called dynamically defined sequences, where the $\alpha_n$'s are determined by $\D$-valued sampling along the orbit of some map $T : \Omega \to \Omega$ that one iterates. That is, $\alpha_n = \alpha_n(\omega) = f(T^n \omega)$ for some $f : \Omega \to \D$ and $\omega \in \Omega$. If one chooses a $T$-ergodic measure $\mu$, then the general theory implies that there is a set $\Sigma \subseteq \partial \D$ such that for $\mu$-almost every $\omega \in \Omega$, the essential support of the measure on $\partial \D$ associated with $\{ \alpha_n(\omega) \}_{n \ge 0}$ is equal to $\Sigma$. Since the essential support of the measure is equal to the essential spectrum of the associated CMV matrix, this can also be rephrased as follows. The essential spectrum of a CMV matrix with dynamically defined Verblunsky coefficients is almost surely constant; see \cite[Theorem~10.16.2]{S05b}. In fact, in developing the general theory for dynamically defined Verblunsky coefficients, it is very natural to pass to a two-sided infinite setting. That is, one considers invertible maps $T$ and then considers $\{ \alpha_n(\omega) \}_{n \in \Z}$, where again $\alpha_n(\omega) = f(T^n \omega)$. The associated extended CMV matrix looks locally like a CMV matrix, but it acts in $\ell^2(\Z)$ and is two-sided infinite as well. The set $\Sigma$ discussed above turns out the be the almost sure spectrum of these extended CMV matrices, that is, these matrices have purely essential spectrum, which coincides with that of their half-line restrictions; see \cite[Theorems~10.16.1 and 10.16.2]{S05b}.

The random case arises from the general scenario by choosing a probability measure $\nu$ on $\D$, and then setting $\Omega = (\mathrm{supp} \, \nu)^\Z$, $T : \Omega \to \Omega$, $(T \omega)_n = \omega_{n+1}$, $f(\omega) = \omega_0$, and $\mu = \nu^\Z$. Thus, one is naturally interested in determining the set $\Sigma$ in this case. This problem turns out to be surprisingly challenging. There is an important analogy between the study of CMV matrices and the study of discrete Schr\"odinger operators, or more generally Jacobi matrices. Indeed, CMV matrices are canonical examples of unitary operators (with a cyclic vector), while Jacobi matrices are canonical examples of self-adjoint operators (with a cyclic vector). Moreover, the tools used in the analysis of one of these classes of matrices usually have counterparts for the other class. An overwhelming number of results have been carried over from one setting to the other. Indeed, much of \cite{S05b} is devoted to carrying over results from discrete Schr\"odinger operators to CMV matrices. Identifying the almost sure spectrum is quite easy in the Schr\"odinger case. A CMV counterpart does not yet exist to the best of our knowledge, despite there being a number of papers studying CMV matrices with random Verblunsky coefficients; compare, for example, \cite{BHJ03, GN01, T92a, T92b}. In fact we will show that the description of the almost sure spectrum in the random case that is known in complete generality in the Schr\"odinger case does \emph{not} carry over to the CMV setting in general!

We will investigate the almost sure spectrum of extended CMV matrices with random coefficients in Section~\ref{s.randomCMV} and prove several results concerning this set. A sample result is the following.

\begin{theorem}\label{t.asspositivevc}
Given a probability measure $\nu$ on $\D$ whose topological support $\mathrm{supp} \, \nu$ is contained in $[0,1]$. Then, the associated almost sure spectrum $\Sigma$ is given by
$$
\Sigma = \{ e^{i \theta} : \theta \in [-\pi, \pi), |\theta| \ge 2 \arcsin \min \mathrm{supp} \, \nu \}.
$$
\end{theorem}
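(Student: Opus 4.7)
My plan is to combine two ingredients: the Johnson-type characterization of $\partial \D \setminus \Sigma$ as the locus where the associated Szeg\H{o} cocycle is uniformly hyperbolic over the full shift, and the uniform hyperbolicity theorem for non-negative Verblunsky coefficients that is the main dynamical contribution of the paper. Set $A = \supp \nu \subseteq [0,1]$, $a = \min A$, and $\Omega = A^\Z$, so that $\mu = \nu^\Z$ has topological support $\Omega$. By the general ergodic theory for extended CMV matrices, $e^{i\theta}\notin\Sigma$ if and only if the cocycle $\omega \mapsto S_{\omega_0}(e^{i\theta})$ is uniformly hyperbolic on $\Omega$. It therefore suffices to identify those $\theta$ for which this holds.

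For the inclusion $\{e^{i\theta} : |\theta|\geq 2\arcsin a\} \subseteq \Sigma$, I would first pin down the spectrum of the constant-$\alpha$ CMV matrix $\mathcal{C}_\alpha$. A direct calculation using \eqref{e.tmbasic} gives $\det S_\alpha(e^{i\theta}) = e^{i\theta}$ and eigenvalues
\[
\lambda_\pm = \frac{e^{i\theta/2}}{\rho}\left(\cos(\theta/2) \pm \sqrt{\alpha^2 - \sin^2(\theta/2)}\right),
\]
whose moduli both equal $1$ precisely when $|\sin(\theta/2)| \leq \alpha$, and are otherwise distinct with product $1$. Hence $\sigma(\mathcal{C}_\alpha) = \{e^{i\theta} : |\theta| \geq 2\arcsin\alpha\}$. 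Since $a \in \supp\nu$, ergodicity guarantees that $\mu$-a.e.\ $\omega$ contains arbitrarily long blocks on which $\alpha_n(\omega)$ is arbitrarily close to $a$, so a standard Weyl-sequence argument yields $\sigma(\mathcal{C}_a) \subseteq \Sigma$, which is the required inclusion.

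For the reverse inclusion, fix $\theta$ with $|\theta| < 2\arcsin a$. Then $|\sin(\theta/2)| < a \leq \alpha$ for every $\alpha \in A$, so each matrix $S_\alpha(e^{i\theta})$ is individually hyperbolic by the computation above. I would then invoke the paper's uniform hyperbolicity theorem for Szeg\H{o} cocycles with $[0,1]$-valued Verblunsky coefficients to upgrade pointwise hyperbolicity to uniform hyperbolicity of the cocycle on $\Omega$, and conclude via the Johnson-type theorem that $e^{i\theta} \notin \Sigma$.

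The heart of the matter, and the main obstacle, is the uniform hyperbolicity assertion just used: pointwise hyperbolicity of the generators is far from sufficient, and one needs the family $\{S_\alpha(e^{i\theta}) : \alpha \in A\}$ to admit a common invariant unstable cone in $\C^2$. The hypothesis $A\subseteq [0,1]$ enters crucially here. Since each $S_\alpha(e^{i\theta})$ lies in $U(1,1)$ and acts as a hyperbolic isometry on the Poincar\'e disk, non-negativity of $\alpha$ forces the attracting and repelling fixed points on the boundary to be parameterized monotonically in $\alpha$, yielding a common attracting arc and hence a common unstable cone. For general complex $\alpha\in\D$ these fixed points can be scattered across the boundary, and correspondingly the Schr\"odinger-style identification $\Sigma = \bigcup_\alpha \sigma(\mathcal{C}_\alpha)$ fails in the general CMV setting — exactly the phenomenon the authors flag in the introduction.
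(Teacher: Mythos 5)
Your proposal follows essentially the same route as the paper: uniform hyperbolicity of the Szeg\H{o} cocycle on the arc around $1$ (Theorem~\ref{t.uniformgrowth}, proved via exactly the common invariant cone you describe in Lemma~\ref{l.uh}) combined with the Geronimo--Johnson theorem yields the spectral gap, while the spectrum of the constant-coefficient (Geronimus) CMV matrix plus a trial-vector/Weyl-sequence argument using ergodicity yields the reverse inclusion, which is precisely the paper's proof of Theorem~\ref{t.asspositivevc2}. One small slip: the eigenvalue moduli both equal $1$ precisely when $\lvert\sin(\theta/2)\rvert \ge \alpha$, not $\le \alpha$ (otherwise your stated spectrum $\{ e^{i\theta} : |\theta| \ge 2\arcsin\alpha\}$ and your later use of hyperbolicity for $\lvert\sin(\theta/2)\rvert < \alpha$ would be contradicted), but this is an internal typo rather than a gap.
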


Thus, for non-negative random Verblunsky coefficients, we determine the almost sure spectrum explicitly. In the absence of such a non-negativity condition, we are still able to give a description of the almost sure spectrum in terms of the spectra of suitable extended CMV matrices with periodic Verblunsky coefficients.

\bigskip

In Section~\ref{sec:app} we study the one-dimensional Ising model with (positive, or \textit{ferromagnetic}) nearest neighbor interaction, immersed in a transverse magnetic field. Indeed, this model is the simplest among Ising-type models that already in one dimension presents interesting challenges. In this paper we focus on the distribution of the Lee-Yang zeros in the thermodynamic limit, where the neighbor interaction is given by an arbitrary bounded sequence.

Let $\Lambda_N:=\set{\pm 1}^N$, with $N\in\Z_+$. The one-dimensional ferromagnetic nearest-neighbor Ising model on the lattice of length $N$ with constant field is defined as
$$
E(\sigma) := -\frac{1}{k_B\tau}\sum_{j=1}^NJ_j\sigma_j\sigma_{j+1}+H\sigma_j
$$
with $\tau \in (0,\infty)$ being the temperature, $k_B > 0$ is the Boltzmann constant (often factored into $\tau$), $\set{J_j}_{1\leq j \leq N}$, $J_j > 0$ for all $j$, are the interaction couplings, $H \neq 0$ is the magnetic field, and $\sigma=(\sigma_1, \dots, \sigma_N)\in\Lambda_N$. We work with periodic boundary conditions $\sigma_1=\sigma_{N+1}$. Also notice that the model is defined on the positive side of the integer lattice. One could also define the two-sided (and, without loss of generality, symmetric) model by considering it on the lattice $[-N, N]\subset\Z$. There is no difference between the two in the finite case.

Define the \textit{partition function} as
$$
Z^{(N)}_\tau := \sum_{\sigma\in\Lambda_N}e^{-E(\sigma)}.
$$

Let us also introduce the following change of variables.
$$
\beta_j = \exp\left(\frac{2J_j}{k_B\tau}\right)\hspace{2mm}\text{ and }\hspace{2mm}h:=\exp\left(\frac{2H}{k_B\tau}\right).
$$
Obviously $Z^{(N)}_\tau$ can be written in terms $\beta_j$'s and $h$, in which case it becomes a Laurent polynomial in $h$.

All the thermodynamic properties of the model can be derived from $Z^{(N)}_\tau$. In particular, phase transitions can be studied in terms of the regularity of $\log Z^{(N)}_\tau$. Let us consider $h\in\C$ by taking $H\in\C$. For all $N$ finite, $Z^{(N)}_\tau$ is clearly analytic in $h$, and $\log Z^{(N)}_\tau$ is analytic for $h\in \R_+$. In this case no phase transitions occur. Thus a relevant question is whether phase transitions occur in the \textit{thermodynamic limit} $N\rightarrow\infty$. That is, we are interested to know whether the limit $\lim_{N\rightarrow\infty}\frac{1}{N}\log Z_\tau^{(N)}$ exists, and if it does, what is its regularity in the variable $\log h$. For example, if the limit exists but its $k$th derivative with respect to $\log h$ does not exist or is discontinuous, we say that the model undergoes a $k$th-order phase transition in the thermodynamic limit (the physically relevant case being $h\in (0,\infty)$). In \cite{LY52a, LY52b}, Lee and Yang related the problem to the distribution of zeros of $Z^{(N)}_\tau(h)$ in the thermodynamic limit. Namely, they showed that the zeros lie on the unit circle for all $N$, and that phase transitions in the thermodynamic limit occur if and only if the point $1$ is an accumulation point of the zeros (separating the system into two phases).

On the other hand, the authors of \cite{DMY13} succeeded in relating the zeros of $Z_\tau^{(N)}$ in the thermodynamic limit to the spectrum of an associated CMV matrix. Using this connection, we are able to prove in this paper that, in complete generality, the zeros do not accumulate at 1, thus precluding any phase transitions. More precisely, we will show the following result in Section~\ref{sec:app}.

\begin{theorem}\label{t.spectralgap}
For every bounded sequence $\set{J_j}$ of ferromagnetic couplings and every $\tau > 0$, there is a neighborhood $\mathcal{N}$ of $1$ such that for every $N \in\Z_+$, $Z_\tau^{(N)}$ does not have any zeros in $\mathcal{N}$.
\end{theorem}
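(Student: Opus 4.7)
My strategy is to rephrase the question about zeros of $Z^{(N)}_\tau$ in terms of uniform hyperbolicity of an associated Szeg\H{o} cocycle at a specific point $z_0 \in \partial \D$, apply the main uniform hyperbolicity theorem of this paper, and then use the openness of this property in the spectral parameter to conclude.

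First, following the construction of \cite{DMY13}, I would write the partition function as a trace of a product of Ising transfer matrices, $Z^{(N)}_\tau(h) = \tr (T_N(h) \cdots T_1(h))$, and then conjugate these matrices to a product of Szeg\H{o} transfer matrices $S_{\alpha_j}(z)$ via an explicit change of variables $h \mapsto z \in \partial \D$ and $\beta_j \mapsto \alpha_j \in \D$. Under this correspondence, $h = 1$ corresponds to a distinguished $z_0 \in \partial \D$; the ferromagnetic sign condition $J_j > 0$ forces the Verblunsky coefficients $\alpha_j$ to be real and non-negative; and boundedness $\sup_j J_j \leq M$ confines them to a compact set $K \subset [0,1)$ with $\sup K$ depending only on $M$ and $\tau$. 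A zero of $Z^{(N)}_\tau(h)$ then corresponds precisely to $z(h)$ lying in the spectrum of the length-$N$ CMV matrix with Verblunsky coefficients $\alpha_1, \ldots, \alpha_N$, equivalently to the Szeg\H{o} transfer matrix product at $z(h)$ failing to be hyperbolic (as an $\SL(2,\R)$-element after normalization).

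Next I would verify uniform hyperbolicity at $z_0$. Since $\alpha_j \in [0,1)$ is real, the matrices $S_{\alpha_j}(z_0)$ share a common real eigenbasis with eigenvalues $\sqrt{(1\pm\alpha_j)/(1\mp\alpha_j)}$, so the $N$-fold product is simultaneously diagonalized with eigenvalues $\prod_j \sqrt{(1+\alpha_j)/(1-\alpha_j)}$ and its reciprocal. In particular the product is hyperbolic at $z_0$ with real positive eigenvalues and its trace is bounded below by $2$, independently of $N$ and of the sequence in $K^{\N}$. The main uniform hyperbolicity theorem of this paper then upgrades this to an honest exponential lower bound $\|S_{\alpha_N}(z_0) \cdots S_{\alpha_1}(z_0)\| \geq C \lambda^N$ with $C > 0$ and $\lambda > 1$ depending only on $K$.

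Finally, uniform hyperbolicity is an open condition in the spectral parameter with uniform quantitative estimates, so there is an open neighborhood $\mathcal{V}$ of $z_0$ in $\partial \D$ on which $\|S_{\alpha_N}(z) \cdots S_{\alpha_1}(z)\| \geq C' \lambda'^N$ holds uniformly, whence the trace is bounded away from zero uniformly in $N$. Pulling back through the change of variables yields a neighborhood $\mathcal{N}$ of $h = 1$ on which $Z^{(N)}_\tau$ does not vanish for any $N$. The most delicate step will be the first---implementing the \cite{DMY13} correspondence carefully enough to identify the distinguished point $z_0$ associated with $h = 1$ and to read off the explicit compact set $K$ of Verblunsky coefficients; with that in hand, the common diagonalization at $z_0$, the invocation of the main uniform hyperbolicity theorem, and the openness argument proceed in a largely mechanical way.
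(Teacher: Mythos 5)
Your overall architecture (pass to Szeg\H{o} transfer matrices via \cite{DMY13}, establish uniform hyperbolicity near the relevant spectral point, conclude that the discriminant cannot vanish there) is the same as the paper's, but two steps have genuine problems. First, you have the correspondence backwards: with $\alpha_j = 1/\beta_j = e^{-2J_j/(k_B\tau)}$, the hypothesis $\sup_j J_j \le M$ yields a \emph{lower} bound $\alpha_j \ge e^{-2M/(k_B\tau)} > 0$, not an upper bound below $1$; since no positive lower bound on the $J_j$ is assumed, the $\alpha_j$ may accumulate at $1$, so your compact set $K$ with $\sup K$ controlled by $M,\tau$ need not exist. This matters because your final step --- openness of uniform hyperbolicity in $z$ ``with uniform quantitative estimates'' --- implicitly requires that the one-step matrices depend on $z$ uniformly over the allowed coefficients; but $\|A(\alpha,z)-A(\alpha,1)\| = |z-1|\sqrt{1+\alpha^2}/\rho$ with $\rho=(1-\alpha^2)^{1/2}\to 0$ as $\alpha\to 1$, so the perturbation in $z$ is not uniformly small, and a soft openness argument produces no neighborhood uniform in $N$ and in the sequence. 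The paper avoids exactly this difficulty: Lemma~\ref{l.uh} gives an explicit invariant-cone estimate with expansion rate $\kappa>1$ depending only on the lower bound $A$, valid for \emph{all} $\alpha\in[A,1)$ and all $z$ in the explicit arc $R_A$, and this is where the neighborhood $\mathcal{N}=R_\alpha$, $\alpha=\inf_j\alpha_j$, of Theorem~\ref{thm:tdlimit-gap} comes from.

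Second, even granting the exponential bound $\|A(\alpha_N,z)\cdots A(\alpha_1,z)\|\ge C\lambda^N$, your inference ``whence the trace is bounded away from zero'' is a non sequitur: a matrix can have arbitrarily large norm and zero trace. The paper bridges this by noting that if the trace vanished, the product (whose determinant is unimodular) would be elliptic and hence have bounded powers, contradicting Theorem~\ref{t.uniformgrowth} applied to the $N$-periodic sequence $\alpha_1,\dots,\alpha_N,\alpha_1,\dots$; alternatively, the cone argument itself yields an eigenvector inside the cone with eigenvalue at least $\kappa^N$, whence $|\tr|\ge\kappa^N+\kappa^{-N}>2$. Your diagonalization at $z_0=1$ is correct (and gives hyperbolicity at that single point for free, also only when $\inf_j\alpha_j>0$), but without the corrected direction of the bound, a cone (or periodization) argument valid on a whole arc, and the bridge from norm growth to nonvanishing trace, the proof as written does not produce a neighborhood of $1$ that works for every $N$.
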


In fact, the neighborhood $\mathcal{N}$ is explicitly given in terms of $\tau$ and $\sup_j J_j$. To the best of our knowledge, while expected in the community, this result hitherto has not been rigorously established.

When the interaction couplings are dynamically defined, we also relate the limit of the zero counting measures to the density of states measure for the associated CMV matrix. Finally, we establish a gap labeling scheme for the zeros in the thermodynamic limit. The latter result has been postulated by a few authors, based primarily on numerical evidence (e.g. \cite{BGP95, BG01}). The conjectures there were guided by the gap labeling available for Schr\"odinger operators; however, an appropriate mechanism of relating the thermodynamic limit of the zeros to spectra of operators on infinite-dimensional Hilbert spaces in a way that would be suitable for carrying over such results as gap labeling had not been suggested until \cite{DMY13}.

\bigskip

As a matter of fact, the questions we are interested in, concerning both the almost sure random spectrum and the one-dimensional Ising model, are intimately related to the same basic concept, namely the uniform hyperbolicity of Szeg\H{o} cocycles over the shift transformation. Specifically, this means the following. If we iterate the recursion \eqref{e.tmbasic}, we are naturally led to a study of products of matrices of the form
\begin{equation}\label{e.tmonestep}
A(\alpha,z) = \frac 1\rho \begin{pmatrix} z & - \bar\alpha \\ - \alpha z & 1 \end{pmatrix},
\end{equation}
where, as usual, $\rho = (1-|\alpha|^2)^{1/2}$. Here, $z$ will be held fixed in the product, while $\alpha$ may vary from factor to factor. The goal is to show that the norm of this product, $\| A(\alpha_n,z) \cdots A(\alpha_1,z) \|$, will satisfy a lower bound that is exponentially growing in $n$ and uniform in the choice of the $\alpha_n$, restricted to some subset $S$ of $\D$. In this case, we say that the $z$ and $S$ in question lead to uniformly hyperbolic behavior. This setting can be rephrased in dynamical systems language, which (we will do in Section~\ref{s.randomCMV} and which) justifies the terminology. In Section~\ref{s.uh} we will prove uniformly hyperbolic behavior for any set $S$ that is contained in $(0,1)$ and bounded away from $0$, and corresponding $z$'s that belong to some arc in $\partial \D$ around $1$ that depends on the distance of $S$ from $0$. This will be precisely what we need in our study of the Ising model in Section~\ref{sec:app}, and it will allow us to identify the almost sure random spectrum in Section~\ref{s.randomCMV}, provided that $\nu$ is supported in some set $S$ of the form above.
\bigskip

\noindent\textbf{Acknowledgement.} The authors are grateful to Michael Baake and Uwe Grimm for helpful discussions on the Ising model.

\section{Uniform Hyperbolicity for Szeg\H{o} Cocycles with Positive Coefficients}\label{s.uh}

\begin{defi}\label{def:spectral-gap}
Given $\alpha \in (0,1)$, we set $R_\alpha = \{ e^{i \theta} : - 2 \arcsin \alpha < \theta < 2 \arcsin \alpha \}$.
\end{defi}

Fix $A \in (0,1)$. We wish to work with Verblunsky coefficients $\alpha \in [A,1)$. Let $z = w^2 \in R_A  \subseteq  \partial \D$. We can choose the root $w$ with $\arg w \in (- \arcsin A, \arcsin A)$, so that
\begin{equation}\label{e.RewImw}
\Re w > \sqrt{1-A^2}, \quad \lvert \Im w \rvert < A.
\end{equation}
Let
\[
C = \sqrt{\frac{1+A}{1-A}}, \quad \kappa = C \Re w - \lvert \Im w \rvert.
\]
It is easy to check that
\[
\kappa > 1.
\]
The usual transfer matrices \eqref{e.tmonestep} can be divided by $\sqrt z = w$ and conjugated by
\[
U = \frac 1{\sqrt 2} \begin{pmatrix} 1 & 1 \\ -i  & i \end{pmatrix}.
\]
This gives matrices of the form (using $\alpha \in \mathbb{R}$)
\[
B(\alpha, w)
=
w^{-1} U A(\alpha, w^2) U^{-1}
=
\frac 1\rho \begin{pmatrix} (1-\alpha) \Re w & -(1-\alpha) \Im w \\   (1+\alpha) \Im w  &  (1+\alpha) \Re w \end{pmatrix}.
\]
Uniform hyperbolicity follows from the following statement.

\begin{lemma}\label{l.uh}
Suppose $A \in (0,1)$ and $z = w^2 \in R_A$. Let
\[
\begin{pmatrix} \tilde x \\ \tilde y \end{pmatrix} = B(\alpha, w) \begin{pmatrix} x \\ y \end{pmatrix}.
\]
\begin{enumerate}

\item[{\rm (a)}] If $y > C \lvert x \rvert$, then $\tilde y > C \lvert \tilde x \rvert$.

\item[{\rm (b)}] If $y > C \lvert x \rvert$, then $\tilde y > \kappa y$.

\end{enumerate}
\end{lemma}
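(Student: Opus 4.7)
The plan is to introduce the shorthand $a = \Re w$, $b = \Im w$, and $c = \sqrt{(1+\alpha)/(1-\alpha)}$, so that $(1\pm\alpha)/\rho = c^{\mp 1}$ and the action of $B(\alpha,w)$ on $(x,y)$ becomes
\[
\tilde{x} = c^{-1}(ax - by), \qquad \tilde{y} = c(bx + ay).
\]
The standing hypotheses $\alpha \in [A,1)$ and $z = w^2 \in R_A$, together with \eqref{e.RewImw}, give $c \geq C$, $a > \sqrt{1-A^2}$, and $|b| < A$, while by definition $\kappa = Ca - |b|$.

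I would prove (b) first by a direct lower bound. Starting from $y > C|x|$ (which forces $y > 0$), I would write
\[
\tilde{y} = c(ay + bx) \geq c(ay - |b||x|) > c\Bigl(a - \tfrac{|b|}{C}\Bigr) y = \tfrac{c\kappa}{C}\, y \geq \kappa y,
\]
where the middle strict inequality uses $|x| < y/C$, and the last step uses $c \geq C$. This is essentially an exercise in estimating entries once the reparametrization above is in place.

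For (a) I would pair this lower bound on $\tilde{y}$ with the matching upper bound
\[
C|\tilde{x}| \leq \tfrac{C}{c}(a|x| + |b|y) < \tfrac{a + C|b|}{c}\, y,
\]
which reduces the desired inequality $\tilde{y} > C|\tilde{x}|$ to the purely algebraic statement $c^2 \kappa \geq C(a + C|b|)$. Since $c \geq C$, it suffices to verify this at $c = C$, where it simplifies to $a(C^2 - 1) \geq 2C|b|$. The main obstacle is confirming this cleanly: substituting $C^2 - 1 = 2A/(1-A)$ and clearing square roots, it is equivalent to $aA \geq \sqrt{1-A^2}\,|b|$, which follows immediately from $a > \sqrt{1-A^2}$ and $|b| < A$ in \eqref{e.RewImw}, with strict slack in both factors. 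Notice this is the same slack that forces $\kappa > 1$, so the uniform cone-invariance and expansion both ultimately come from the strict inequalities in \eqref{e.RewImw}.
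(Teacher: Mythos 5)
Your proposal is correct and takes essentially the same route as the paper: the same conjugated matrix and cone $\{\, y > C\lvert x\rvert \,\}$, with part (b) proved by the identical triangle-inequality estimate, and part (a) reduced --- after bounding $\tilde y$ from below and $C\lvert \tilde x\rvert$ from above in terms of $y$ alone --- to the scalar inequality $a(C^2-1) \ge 2C\lvert b\rvert$, i.e.\ $A\,\Re w \ge \sqrt{1-A^2}\,\lvert \Im w\rvert$, which is exactly the inequality $f(A) > 0$ at the heart of the paper's argument, your monotonicity in $c$ playing the role of the paper's monotonicity of $f(\alpha)$ in $\alpha$. The one blemish --- your middle inequality in (b) is not strict when $\Im w = 0$, so at $z=1$ and $\alpha = A$ one only gets $\tilde y = \kappa y$ --- is shared by the paper's own proof and is immaterial, since $\tilde y \ge \kappa y$ with $\kappa > 1$ is all that the applications use.
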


\begin{proof}
(a) We begin by noticing that the expression
\[
f(\alpha) = 2 \alpha \Re w - \left[ C + \frac 1C + \alpha \left(\frac 1C - C \right)  \right] \lvert \Im w \rvert
\]
is an increasing function of $\alpha$, so
\begin{align*}
f(\alpha) \ge f(A) 
= 2 A \Re w - 2 \sqrt{1-A^2} \lvert \Im w \rvert   > 0
\end{align*}
by \eqref{e.RewImw}. By rearranging terms, we can write the inequality $f(\alpha) > 0$ in the form
\[
 (1+\alpha) \Re w - C (1-\alpha) \lvert \Im w \rvert > \frac 1C \left( (1+\alpha) \lvert \Im w\rvert + C (1-\alpha) \Re w   \right)
\]
Multiplying this inequality by the inequality $y > C \lvert x\rvert$ we get
\[
\left( (1+\alpha) \Re w - C (1-\alpha) \lvert \Im w \rvert \right) y > \left( (1+\alpha) \lvert \Im w\rvert + C (1-\alpha) \Re w   \right) \lvert x\rvert
\]
which can, again, be rearranged into the form
\[
- (1+\alpha)\lvert\Im w \rvert \lvert x \rvert  + (1+\alpha) \Re w \, y  > C \left( (1-\alpha) \Re w \lvert x \rvert + (1-\alpha) \lvert \Im w \rvert y \right).
\]
It is clear that the left hand side of this inequality is smaller or equal to $\rho \tilde y$, and the right hand side is larger or equal to $C \rho \lvert \tilde x \rvert$, so we conclude $\tilde y > C \lvert \tilde x\rvert$.

(b) By the triangle inequality and by $\lvert x \rvert < \frac 1C y$,
\begin{equation}\label{e.tildeyineq}
\tilde y > \frac 1\rho \left(  (1+\alpha) \Re w \, y - (1+\alpha)\lvert\Im w \rvert \lvert x \rvert  \right)> \frac {1+\alpha}\rho \left( \Re w  - \lvert\Im w \rvert \frac 1C \right) y.
\end{equation}
Since
\[
\frac{1+\alpha}\rho = \sqrt{\frac{1+\alpha}{1-\alpha}} \ge \sqrt{\frac{1+A}{1-A}} = C
\]
and
\[
\Re w - \lvert \Im w \rvert \frac 1C > \sqrt{1-A^2} - A \sqrt{\frac{1-A}{1+A}} = (1+A)\frac 1C - A \frac 1C = \frac 1C
\]
the inequality \eqref{e.tildeyineq} implies $\tilde y > \kappa y$, where
\[
\kappa = C \Re w - \lvert \Im w \rvert > 1. \qedhere
\]
\end{proof}

Lemma~\ref{l.uh} immediately implies that arbitrary products formed from matrices of the type $A(\alpha,z)$ with $\alpha \in [A,1)$ and $z = w^2$ satisfying \eqref{e.RewImw} have norm that grows exponentially in the number of factors, uniformly in the choice of the parameters:

\begin{theorem}\label{t.uniformgrowth}
For every $A \in (0,1)$, there exist $C > 0$ and $\lambda > 1$ such that for every sequence $\{ \alpha_n \}_{n \in \Z_+} \subset [A,1)$ and every $z \in R_A$, we have
$$
\| A(\alpha_n,z) \times \cdots \times A(\alpha_1,z) \| \ge C \lambda^n
$$
for every $n \in \Z_+$.
\end{theorem}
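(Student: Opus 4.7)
The plan is to deduce the theorem directly from Lemma \ref{l.uh} via the conjugacy relation already established between $A(\alpha,z)$ and $B(\alpha,w)$. Since $z = w^2 \in \partial \D$ forces $\lvert w\rvert = 1$, and since $U$ is unitary, the defining relation $B(\alpha,w) = w^{-1} U A(\alpha,w^2) U^{-1}$ telescopes to
\[
B(\alpha_n,w)\cdots B(\alpha_1,w) = w^{-n} U\bigl[A(\alpha_n,z)\cdots A(\alpha_1,z)\bigr]U^{-1},
\]
so the two $n$-fold products have the same operator norm. It therefore suffices to prove the lower bound for the $B$-product.

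To do so, I would choose the unit test vector $v_0 = (0,1)^T$, which lies in the cone $\{y > C\lvert x\rvert\}$ trivially (since $x_0 = 0$ and $y_0 = 1 > 0$). Write $v_k = B(\alpha_k,w)\cdots B(\alpha_1,w) v_0 = (x_k,y_k)^T$. Lemma \ref{l.uh}(a) applied inductively guarantees that each $v_k$ remains in the cone $\{y > C\lvert x\rvert\}$, which then legitimizes the repeated application of Lemma \ref{l.uh}(b), yielding
\[
y_n > \kappa\, y_{n-1} > \cdots > \kappa^n y_0 = \kappa^n.
\]
Thus $\|B(\alpha_n,w)\cdots B(\alpha_1,w)\| \ge \|v_n\|/\|v_0\| \ge y_n > \kappa^n$, and transporting this back through the conjugacy gives $\|A(\alpha_n,z)\cdots A(\alpha_1,z)\| > \kappa^n$. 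Since $\kappa = C\,\Re w - \lvert\Im w\rvert$ depends only on $A$ and on $z$ (through $w$) and not on the chosen sequence, this settles the sequence-uniform part.

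The one step that deserves a second look is the claimed uniformity in $z \in R_A$. The quantity $\kappa(z)$ is strictly greater than $1$ for each $z \in R_A$, but it degenerates to $1$ as $z$ approaches the endpoints of the open arc $R_A$, so $\lambda$ and the constant $C$ should be understood as uniform once one restricts to a compact subarc of $R_A$ (equivalently, to $\alpha \in [A',1)$ for some $A' < A$ slightly smaller, enlarging the admissible arc and leaving a genuine gap from the endpoints). With such a restriction, $\kappa_{\min} := \inf_{z} \kappa(z) > 1$, and taking $\lambda = \kappa_{\min}$, $C = 1$ completes the argument. This is the only delicate point; everything else is a direct iteration of Lemma \ref{l.uh}.
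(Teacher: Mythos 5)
Your argument is essentially the paper's own proof: the paper offers nothing beyond the assertion that Lemma~\ref{l.uh} ``immediately implies'' Theorem~\ref{t.uniformgrowth}, and the intended implication is exactly your cone iteration --- conjugate by the unitary $U$ and divide by $w$ (both norm-preserving since $\lvert w\rvert =1$), start from $(0,1)^T$, use part (a) to stay in the cone $\{y > C\lvert x\rvert\}$ and part (b) to get $y_n > \kappa^n$, hence $\| A(\alpha_n,z)\cdots A(\alpha_1,z)\| \ge \kappa^n$. Your caveat about uniformity in $z$ is also well taken: $\kappa(z) = C\,\Re w - \lvert \Im w\rvert \to 1$ as $z$ approaches the endpoints of $R_A$, and in fact no $z$-independent pair $C>0$, $\lambda>1$ can work on all of $R_A$ (for the constant sequence $\alpha_n \equiv A$, the normalized one-step matrix $w^{-1}A(A,z)$ is hyperbolic with largest eigenvalue tending to $1$ at the endpoints, so the growth rate of its powers at fixed $z$ near the endpoint is below any prescribed $\log\lambda$); the constants must be allowed to depend on $z$, exactly as the paper itself concedes in Theorem~\ref{t.uh}(a) (``which a priori depend on $z$''), or else one restricts $z$ to a compact subarc. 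One small correction to your parenthetical: taking $A' < A$ \emph{shrinks} the admissible arc, since $R_{A'} \subset R_A$; the clean formulation is that for $z$ in the closed arc $\overline{R_{A'}}$ with $A' < A$, which is a compact subarc of $R_A$, one keeps the cone constant $C$ built from $A$ and gets $\kappa \ge C\sqrt{1-(A')^2} - A' > C\sqrt{1-A^2} - A = 1$ uniformly, so $\lambda$ may be taken as this bound and $C=1$ there. With that reading, your write-up proves precisely the version of the theorem that is used later (uniform hyperbolicity for each fixed $z \in R_A$), and is in this respect more careful than the literal quantification in the statement.
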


\section{The Almost Sure Spectrum of Random Extended CMV Matrices}\label{s.randomCMV}

Fix $0 < A < B < 1$ and consider the compact interval $[A,B]$. Taking the infinite product $\Omega = [A,B]^\Z$ and equipping it with the product topology, $\Omega$ is compact as well. The shift transformation $T : \Omega \to \Omega$ is given by $(T \omega)_n = \omega_{n+1}$. Let $z \in \partial \D$ and consider the map
$$
A_z(\cdot) : \Omega \to U(1,1), \quad \omega \mapsto (1 - |\omega_0|^2)^{-1/2} \begin{pmatrix} z & - \bar\omega_0 \\ - \omega_0 z & 1 \end{pmatrix}.
$$
With this map, we define the associated Szeg\H{o} cocycle
$$
(T, A_z) : \Omega \times \C^2 \to \Omega \times \C^2, \quad (\omega,v) \mapsto (T \omega, A_z(\omega) v),
$$
which is an extension of the shift transformation in the base $\Omega$. Notice that $(T, A_z)$ is invertible and, for $n \in \Z$, we can write the iterate $(T, A_z)^n$ as $(T^n, A_z^n)$ with a suitable map $A_z^n : \Omega \to U(1,1)$. For example, for $n \ge 1$, we have $A_z^n(\omega) = A_z(T^{n-1} \omega) \cdots A_z(\omega)$.

\begin{theorem}\label{t.uh}
The cocycle $(T, A_z)$ is uniformly hyperbolic for every $z \in R_A$. That is, for each $z \in R_A$, each of the following equivalent statements holds.
\begin{itemize}

\item[{\rm (a)}] There exist constants $\lambda > 1$ and $C>0$ {\rm (}which \emph{a priori} depend on $z${\rm )} so that $ \|A_z^n (\omega) \| \geq C\lambda^{|n|} $ for all $n \in \Z$ and all $\omega \in \Omega$.

\item[{\rm (b)}] $(T,A_z)$ admits a continuous invariant exponential splitting. That is to say, there are continuous maps $\Lambda_z^s,\Lambda_z^u : \Omega \to \C\PP^1$ and constants $c>0,L>1$ so that the following two conditions hold. First,
$$
A_z(\omega) \cdot \Lambda^s_z(\omega)
=
\Lambda^s_z(T\omega)
\quad
\text{and}
\quad
A_z(\omega) \cdot \Lambda^u_z(\omega)
=
\Lambda^u_z(T\omega)
$$
for every $\omega$.  Second,
$$
\| A_z^n v_s \| \leq cL^{-n},
\quad
\| A_z^{-n} v_u \| \leq cL^{-n}
$$
for all $ v_s \in \Lambda^s_z(\omega), v_u \in \Lambda_z^u(\omega) $ and every $n \geq 0$.

\item[\rm (c)] $(T,A_z)$ does not enjoy a Sacker-Sell type orbit.  That is to say, for any $\omega \in \Omega$ and any $v \in \C^2$ such that $\|v\| = 1$, there exists an $n \in \Z$ for which $\| A_z^n(\omega) v \| > 1$.

\end{itemize}
\end{theorem}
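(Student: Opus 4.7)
The plan is to prove (a) directly from Theorem \ref{t.uniformgrowth}, construct the exponential splitting in (b) from the invariant cone field already identified in Lemma \ref{l.uh}, and then obtain (c) as an easy consequence of (b). For (a), the cocycle identity gives $A_z^n(\omega) = A(\omega_{n-1}, z) \cdots A(\omega_0, z)$ for $n \ge 1$, so Theorem \ref{t.uniformgrowth} directly provides constants $C > 0$ and $\lambda > 1$ with $\| A_z^n(\omega) \| \ge C \lambda^n$ uniformly in $\omega \in \Omega$ (the hypothesis $\omega_j \in [A,1)$ is met since $[A,B] \subseteq [A,1)$). To handle $n \le -1$, a direct two-by-two computation shows $\det A(\alpha, z) = z$, hence $\lvert \det A_z^n(\omega) \rvert = 1$, which for any $2 \times 2$ matrix $M$ with $\lvert \det M \rvert = 1$ forces $\| M^{-1} \| = \| M \|$. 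Combined with the cocycle identity $A_z^{-n}(\omega) = A_z^n(T^{-n}\omega)^{-1}$, this upgrades the bound to $\| A_z^n(\omega) \| \ge C \lambda^{|n|}$ for all $n \in \Z$.

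For (b), I would pass to the rotated coordinates of Lemma \ref{l.uh}, in which every matrix $B(\alpha, w)$ with $\alpha \in [A,B]$ preserves the real cone $K_+ = \set{(x,y) : y > C |x|}$ and expands the $y$-coordinate by at least $\kappa > 1$. The unstable line at $\omega$ can then be obtained as the nested intersection
\[
\Lambda_z^u(\omega) = \bigcap_{n \ge 0} A_z^n(T^{-n}\omega) \cdot K_+,
\]
which collapses to a single complex line because the iterated images contract uniformly in the projective metric on $\C\PP^1$; continuity in $\omega$ follows from the uniformity of the contraction rate. The stable line $\Lambda_z^s(\omega)$ is constructed symmetrically from the backward-invariant complementary cone $K_- = \set{(x,y) : \lvert x \rvert > C \lvert y \rvert}$, applying the same argument to the inverse cocycle (whose matrices lie in the same class since $\lvert \det B(\alpha,w) \rvert = 1$). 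The exponential decay rates required in (b) then follow by iterating Lemma \ref{l.uh}(b) and its inverse-cocycle counterpart.

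The implication (b) $\Rightarrow$ (c) is immediate: a unit vector $v \in \C^2$ decomposes as $v = v_s + v_u$ with respect to the splitting, and since $\Lambda_z^s(\omega) \ne \Lambda_z^u(\omega)$ at least one component is nonzero; $\| A_z^n(\omega) v \|$ grows exponentially as $n \to +\infty$ if $v_u \ne 0$ and as $n \to -\infty$ if $v_s \ne 0$, so in either case $\| A_z^n(\omega) v \| > 1$ for some $n$. The converse (c) $\Rightarrow$ (a) --- the classical fact that the absence of Sacker-Sell orbits forces uniform hyperbolicity for cocycles over a compact base --- I would cite from the general cocycle literature rather than reprove. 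The main potential obstacle is not the norm growth in (a), which Lemma \ref{l.uh} essentially hands us, but rather continuity and uniqueness of the invariant bundles in (b); however, the quantitative cone estimates in Lemma \ref{l.uh} are designed precisely so that the cone-intersection argument converges geometrically, with rates uniform in $\alpha \in [A,B]$ and $z \in R_A$, making the continuity of $\Lambda_z^{s/u}$ automatic.
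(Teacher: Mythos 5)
Your treatment of part (a) is essentially the paper's own argument: for $n\ge 1$ it is Theorem~\ref{t.uniformgrowth} verbatim, and for $n\le -1$ you use the cocycle identity together with $\norm{M^{-1}}=\norm{M}$, which you justify via $\abs{\det A(\alpha,z)}=\abs{z}=1$ while the paper invokes $A_z^n(\omega)\in U(1,1)$ — for $2\times 2$ matrices these give the same conclusion, so this part is fine. Where you diverge is in (b) and (c): the paper stops after (a) and simply cites the well-known equivalence of (a), (b), (c) for cocycles over a compact base (Bochi--Gourmelon, Yoccoz, Zhang, etc.), whereas you attempt a self-contained construction of the splitting from the cone field of Lemma~\ref{l.uh}. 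That is a legitimate alternative route in principle, but your sketch of it has a genuine flaw.

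The problem is the stable direction. You claim the cone $K_-=\set{(x,y):\abs{x}>C\abs{y}}$, with the \emph{same} constant $C$, is invariant under the inverse cocycle. Checking this on the boundary vector $x=1$, $y=-\mathrm{sgn}(\Im w)/C$ (note $\det B=1$, so $B^{-1}=\rho^{-1}\bigl(\begin{smallmatrix}(1+\alpha)\Re w & (1-\alpha)\Im w\\ -(1+\alpha)\Im w & (1-\alpha)\Re w\end{smallmatrix}\bigr)$), backward invariance requires $2\alpha\,\Re w \ge \abs{\Im w}\,\bigl[(1-\alpha)/C + C(1+\alpha)\bigr]$, which is strictly stronger than the inequality $f(\alpha)>0$ established in Lemma~\ref{l.uh}. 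It fails for $z$ near the endpoints of $R_A$: with $\alpha=A$, $\Re w\approx\sqrt{1-A^2}$, $\abs{\Im w}\approx A$, the left side is about $2A\sqrt{1-A^2}$ while the right side is about $2A(1+A^2)/\sqrt{1-A^2}$, and $1-A^2<1+A^2$ always. So $K_-$ is not backward invariant for all $z\in R_A$, and your construction of $\Lambda^s_z$ as written does not go through; one must instead use the complement of $K_+$ (backward invariant for free, since $B K_+\subseteq K_+$ and $B$ is invertible) and recover the stable decay rate separately, e.g.\ from $\det B=1$ together with uniform transversality of the two bundles. Relatedly, the "uniform contraction in the projective metric" that makes the nested intersection collapse is asserted rather than derived — it is true for fixed $z\in R_A$, but it needs a supplementary estimate on the induced slope map, not just Lemma~\ref{l.uh}(a)--(b). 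The cleanest fix, and the paper's actual route, is to prove only (a) and then invoke the cited equivalence of (a)--(c); if you want the direct construction of (b), the stable-cone step must be repaired as above.
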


\begin{proof}
Part (a) is an immediate consequence of Theorem~\ref{t.uniformgrowth}. More precisely, for $n \in \Z_+$, the statement is precisely the one given in Theorem~\ref{t.uniformgrowth}, while for $n \in \Z_-$ one notes that \begin{align*}
A^n_z(\omega) & = A_z(T^n\omega)^{-1} \cdots A_z(T^{-1}\omega)^{-1} \\
& = \left( A_z(T^{-1}\omega) \cdots A_z(T^n\omega) \right)^{-1} \\
& = \left( A(\omega_{-1},z) \cdots A(\omega_{n},z) \right)^{-1},
\end{align*}
and that $\|A^n_z(\omega)^{-1}\| = \|A^n_z(\omega)\|$ since $A^n_z(\omega) \in U(1,1)$, and then applies Theorem~\ref{t.uniformgrowth}. The equivalence of (a),  (b), and (c) is well-known; compare \cite[Theorem~A]{BG}, \cite[Proposition~2]{Y04}, and \cite{Z}. See also \cite{J} \cite{sacksell1}, \cite{sacksell2}, \cite{selgrade}.
\end{proof}

With this result in hand, we can now describe the almost sure spectrum of random extended CMV matrices with positive Verblunsky coefficients. Suppose $\nu$ is a probability measure with topological support $\mathrm{supp} \, \nu \subset (0,1)$. Set $\mu = \nu^\Z$ and, for $\omega \in \mathrm{supp} \, \mu = (\mathrm{supp} \, \nu)^\Z$, denote by $\mathcal{E}_\omega$ the extended CMV matrix with coefficients $\{ \omega_n \}_{n \in \Z}$. By the general theory of ergodic extended CMV matrices there exists a closed set $\Sigma_\mu \subset \partial \D$ such that $\sigma(\mathcal{E}_\omega) = \Sigma_\mu$ for $\mu$-almost every $\omega$. Namely, modulo minor adjustments this follows from \cite[Theorem~10.16.2]{S05b} (the essential spectrum in the one-sided case and the spectrum in the two-sided case coincide).

\begin{theorem}\label{t.asspositivevc2}
We have $\Sigma_\mu = \partial \D \setminus R_{\min \mathrm{supp} \, \nu}$.
\end{theorem}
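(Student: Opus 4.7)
The plan is to prove the two inclusions $\Sigma_\mu \subseteq \partial\D \setminus R_A$ and $\partial\D \setminus R_A \subseteq \Sigma_\mu$ separately, writing $A := \min \mathrm{supp} \, \nu$. Since $\mathrm{supp} \, \nu$ is a closed subset of the open interval $(0,1)$ it is compact, so $\mathrm{supp} \, \nu \subseteq [A,B]$ for some $B < 1$; this places us in the setting of Theorem~\ref{t.uh} with this interval.

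For the inclusion $\Sigma_\mu \cap R_A = \emptyset$, I would fix $z \in R_A$ and invoke Theorem~\ref{t.uh}(b), which provides a continuous invariant exponential splitting of the Szeg\H{o} cocycle over all of $\mathrm{supp} \, \mu$. It is a standard principle for extended CMV matrices (the Szeg\H{o}-cocycle analogue of Johnson's theorem; cf.\ \cite[Section~10.16]{S05b}) that the existence of such a splitting at $z$ is equivalent to $z$ belonging to the resolvent set of $\mathcal{E}_\omega$ for \emph{every} (not just $\mu$-a.e.)\ $\omega$. Concretely, one builds Jost-type solutions of $\mathcal{E}_\omega \psi = z \psi$ from the stable and unstable sections, decaying exponentially at $+\infty$ and $-\infty$, respectively, and assembles a Green's function with exponential off-diagonal decay that inverts $\mathcal{E}_\omega - z$ on $\ell^2(\Z)$. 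This gives $R_A \cap \Sigma_\mu = \emptyset$.

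For the reverse inclusion I would exhibit enough of $\partial\D \setminus R_A$ directly by comparing to a constant-coefficient operator. Let $\mathcal{E}_A$ be the extended CMV matrix with constant Verblunsky coefficients $\alpha_n \equiv A$. A short computation, checking when the constant transfer matrix $A(A,z)$ has unimodular eigenvalues by comparing $|\mathrm{tr}\, A(A,z)|^2$ with $4|\det A(A,z)| = 4$, yields $\sigma(\mathcal{E}_A) = \partial\D \setminus R_A$. Since $A \in \mathrm{supp} \, \nu$, for every $\varepsilon > 0$ and every $N$ the cylinder $\{\omega : |\omega_j - A| < \varepsilon \text{ for } 1 \le j \le N \}$ has strictly positive $\mu$-measure, so by Birkhoff's ergodic theorem applied to the shift $T$, $\mu$-almost every $\omega$ admits arbitrarily long windows on which $\omega_j$ is uniformly $\varepsilon$-close to $A$. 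Translating a Weyl sequence for $\mathcal{E}_A$ at any $z \in \sigma(\mathcal{E}_A)$ onto such a window and using continuity of the CMV entries in the Verblunsky coefficients produces a Weyl sequence for $\mathcal{E}_\omega$ at the same $z$, so $\sigma(\mathcal{E}_A) \subseteq \sigma(\mathcal{E}_\omega) = \Sigma_\mu$ almost surely.

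The main obstacle is the first inclusion: Theorem~\ref{t.uh} is essentially off-the-shelf, but converting the invariant splitting into membership in the resolvent set for every $\omega$ requires the full CMV analogue of Johnson's theorem, with careful bookkeeping of the conjugation by $U$ and the $\sqrt{z}$-rescaling used to pass between $A(\alpha,z)$ and $B(\alpha,w)$ before Lemma~\ref{l.uh}. The lower inclusion is by contrast routine once $\sigma(\mathcal{E}_A)$ has been identified and the Birkhoff-type density of $A$-like windows in a typical $\omega$ has been noted.
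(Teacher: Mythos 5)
Your proposal is correct and follows essentially the same route as the paper: uniform hyperbolicity (Theorem~\ref{t.uh}) combined with the Johnson-type theorem for Szeg\H{o} cocycles to exclude $R_{\min \supp \nu}$ from the spectrum, and a constant-coefficient operator plus a Weyl-sequence/trial-vector argument over long nearly-constant windows for the reverse inclusion. The step you flag as the main obstacle is handled in the paper simply by citing Geronimo--Johnson \cite[Theorems~2.6 and 5.1]{GJ96} (applicable since $\supp \nu$ is compact in $(0,1)$), so no hand-built Green's function is needed, and the paper likewise quotes $\sigma(\E_\alpha) = \partial\D \setminus R_\alpha$ from \cite[Example~1.6.12]{S05} and defers the trial-function details to the proof of Theorem~\ref{thm:as-spectr}.
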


\begin{proof}
By Theorem~\ref{t.uh}, the cocycle $(T, A_z)$ is uniformly hyperbolic for every $z \in R_{\min \mathrm{supp} \, \nu}$. It follows that that $R_{\min \mathrm{supp} \, \nu}$ is fully contained in $\partial \D \setminus \Sigma_\mu$ by \cite[Theorems~2.6 and 5.1]{GJ96}. Note that the results from \cite{GJ96} apply since $\mathrm{supp} \, \nu$ is a compact subset of $(0,1)$.

Conversely, note that for every $\alpha \in \mathrm{supp} \, \nu$, the extended CMV matrix with constant coefficients given by $\alpha$ has spectrum $\partial \D \setminus R_{\alpha}$; compare \cite[Example~1.6.12]{S05}. Using trial vectors, one can see that this set must be contained in $\Sigma_\mu$. (We will give  detailed arguments on how to reach this conclusion in a more general case in the proof of Theorem~\ref{thm:as-spectr} below.) Choosing in particular $a = \min \mathrm{supp} \, \nu$, it follows that $\partial \D \setminus R_{\min \mathrm{supp} \, \nu}$ must be contained in $\Sigma_\mu$.
\end{proof}

Theorem~\ref{t.asspositivevc}, stated in Section~\ref{s.intro}, is an immediate consequence of Theorem~\ref{t.asspositivevc2}. We just need to note that if $\min \mathrm{supp} \, \nu = 0$, then $\Sigma_\mu = \partial \D$ follows immediately from \cite[Example~1.6.12]{S05} and the argument given in the proof above.

\begin{remark}
Note that the almost sure spectrum in the CMV case is obtained in a different way than in the Schr\"odinger case. Recall how this set may be described in the latter case. If one considers a probability measure $\nu$ with compact topological support $\mathrm{supp} \, \nu \subset \R$, sets $\mu = \nu^\Z$, and considers, for $\omega \in \mathrm{supp} \, \mu = (\mathrm{supp} \, \nu)^\Z$, the operator
$$
[H_\omega \psi](n) = \psi(n+1) + \psi(n-1) + \omega_n \psi(n)
$$
in $\ell^2(\Z)$, then by the general theory of ergodic Schr\"odinger operators there exists a compact set $\Sigma_\mu \subset \R$ such that $\sigma(H_\omega) = \Sigma_\mu$ for $\mu$-almost every $\omega$. We have
\begin{equation}\label{e.schaespec}
\Sigma_\mu = [-2,2] + \mathrm{supp} \, \nu \, ;
\end{equation}
see, for example, \cite[Theorem~3.9]{K08}. The identity \eqref{e.schaespec} says that the almost sure spectrum is the set sum of the spectrum of the Laplacian and the almost sure spectrum of the potential. It may also be interpreted as follows,
\begin{equation}\label{e.schaespec2}
\Sigma_\mu = \bigcup_{a \in \mathrm{supp} \, \nu} \sigma(\Delta + a).
\end{equation}
That is, the almost sure spectrum is the union of the spectra of all Schr\"odinger operators with constant potential, where the constant runs through the topological support of the single-site measure $\nu$. This characterization does not extend to the CMV case {\rm (}and hence there is no obvious analog of \eqref{e.schaespec}{\rm )}! To see this, consider a measure $\nu$ with $\mathrm{supp} \, \nu = \{ \alpha, - \alpha \}$ for some small $\alpha > 0$ and the associated family $\{ \mathcal{E}_\omega \}_{\omega \in \mathrm{supp} \, \mu}$ of extended CMV matrices. There are two constant sequences in $\mathrm{supp} \, \mu$, and the spectrum of each of them is equal to $\partial \D \setminus R_\alpha$ by \cite[Example~1.6.12]{S05}. The analog of \eqref{e.schaespec2} would therefore suggest that $\Sigma_\mu = \partial \D \setminus R_\alpha$. However, the spectrum of the $2$-periodic extended CMV matrix with coefficients $\alpha, - \alpha$ on each period has spectrum given by the reflection of $\partial \D \setminus R_\alpha$ about the imaginary axis {\rm (}or rather the set rotated about the origin by $\pi${\rm )}; compare \cite[(3.2.6)]{S05}. A trial function argument shows that the latter set must be contained in $\Sigma_\mu$. Thus, if $\alpha$ is sufficiently small, it follows that $\Sigma_\mu = \partial \D$. This shows that determining the almost sure spectrum in the random case is different for CMV matrices in comparison with Schr\"odinger operators.
\end{remark}

A weak replacement of \eqref{e.schaespec2} in the CMV case is given in the next theorem. It is shown that, while considering the constant sequences drawn from the support of $\nu$ is not enough to determine $\Sigma_\mu$ as discussed in the previous remark, it does suffice to consider the \emph{periodic} sequences drawn from $\supp \, \nu$. This is a result similar to \cite[Theorem~4]{KirschMart92}.

\begin{theorem}\label{thm:as-spectr}
Let $\nu$ be a compactly supported probability measure on $\D$.  Form $\Omega = \supp(\nu)^{\Z}$ and the corresponding family of random two-sided CMV matrices $ (\E_\omega)_{\omega \in \Omega} $ as before.  Then
\begin{equation} \label{eq:as:spectrum}
\Sigma_\mu
=
\overline{
\bigcup_{p \in \Z_+} \bigcup_{\alpha_1, \ldots, \alpha_p \in \supp(\nu)} \sigma(\mathcal{E}_{\alpha_1,\ldots,\alpha_p})},
\end{equation}
where $\mathcal{E}_{\alpha_1,\ldots,\alpha_p}$ denotes a periodic two-sided CMV matrix with Verblunsky coefficients $\alpha_1,\ldots,\alpha_p$.
\end{theorem}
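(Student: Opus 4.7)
The proof splits naturally into two inclusions, in the spirit of the Schr\"odinger analogue \cite{KirschMart92}, with CMV-specific care.

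\emph{The inclusion $\supseteq$.} Fix $p \in \Z_+$, $(\alpha_1,\ldots,\alpha_p) \in \supp(\nu)^p$, and $z \in \sigma(\mathcal{E}_{\alpha_1,\ldots,\alpha_p})$; I aim to show $z \in \Sigma_\mu$ via a shifted-trial-vector argument. Periodic extended CMV matrices admit explicit Bloch/Floquet-type generalized eigenfunctions at every point of their spectrum. Truncating such an eigenfunction to a window $[-L,L]$ and renormalizing yields, for any $\epsilon > 0$, a unit vector $\psi \in \ell^2(\Z)$ supported in $[-L,L]$ with $\|(\mathcal{E}_{\alpha_1,\ldots,\alpha_p} - z)\psi\| < \epsilon$; the commutator error concentrates at the two ends of the window while the $\ell^2$ mass of the truncation grows linearly with $L$. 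I next choose $\omega$ in the intersection of the full-measure sets $\{\sigma(\mathcal{E}_\omega) = \Sigma_\mu\}$ and $\{\{T^n\omega\}_{n \in \Z} \text{ is dense in } \supp(\mu)\}$; the latter holds $\mu$-a.e.\ by ergodicity combined with the separability of $\supp(\mu) = \supp(\nu)^\Z$. Density produces arbitrarily large $n$ for which $(\omega_{n+j})_{|j| \le L+2}$ approximates the periodic extension of $(\alpha_1,\ldots,\alpha_p)$ to any prescribed $\delta > 0$. Since $\supp\nu$ is compact in $\D$, the $\rho$'s are uniformly bounded below and the CMV entries depend continuously on the Verblunsky coefficients; hence $\psi_n(j) := \psi(j-n)$ satisfies $\|(\mathcal{E}_\omega - z)\psi_n\| < \epsilon + C\delta$, placing $z$ within $\epsilon + C\delta$ of $\sigma(\mathcal{E}_\omega) = \Sigma_\mu$. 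Since $\Sigma_\mu$ is closed and $\epsilon,\delta$ are arbitrary, $z \in \Sigma_\mu$.

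\emph{The inclusion $\subseteq$.} Conversely, fix $z \in \Sigma_\mu$ and $\epsilon > 0$. Choose $\omega$ with $\sigma(\mathcal{E}_\omega) = \Sigma_\mu$ and, by Weyl's criterion, a unit vector $\psi \in \ell^2(\Z)$ satisfying $\|(\mathcal{E}_\omega - z)\psi\| < \epsilon/2$. Because $\psi$ is $\ell^2$ and $\mathcal{E}_\omega$ is a finite-width band matrix, truncating to a sufficiently large window $[-N,N]$ and renormalizing yields a compactly supported unit vector $\psi^{(N)}$ with $\|(\mathcal{E}_\omega - z)\psi^{(N)}\| < \epsilon$. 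I would then build a periodic extended CMV matrix $\mathcal{E}^{(N)}$ whose Verblunsky coefficients repeat the block $(\omega_{-N-c},\ldots,\omega_{N+c-1})$, taking the period even so that $\mathcal{E}^{(N)}$ is a genuine extended CMV matrix with all coefficients drawn from $\supp(\nu)$. Because $\mathcal{E}_\omega \psi^{(N)}$ depends only on $\omega$-coefficients inside this window, one has $\mathcal{E}^{(N)} \psi^{(N)} = \mathcal{E}_\omega \psi^{(N)}$, and hence $\|(\mathcal{E}^{(N)} - z)\psi^{(N)}\| < \epsilon$. Weyl's criterion applied to $\mathcal{E}^{(N)}$ gives $\mathrm{dist}(z, \sigma(\mathcal{E}^{(N)})) < \epsilon$, so $z$ lies in the closure on the right-hand side of \eqref{eq:as:spectrum}.

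\emph{Anticipated main obstacle.} The delicate technical points are (i) the Bloch-wave truncation needed to manufacture compactly supported approximate eigenvectors for the periodic operators, and (ii) arranging the period of $\mathcal{E}^{(N)}$ to be compatible with the $2 \times 2$ block structure of extended CMV matrices (possibly necessitating a mild doubling of blocks so that the resulting object is genuinely an extended CMV matrix). Both are standard bookkeeping; once these are in hand, continuity of the CMV matrix entries in the Verblunsky coefficients---guaranteed by the compactness of $\supp \nu \subset \D$---propagates the approximations in both directions.
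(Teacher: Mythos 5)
Your proposal is correct, and while the ``$\supseteq$'' half follows the paper's route, your ``$\subseteq$'' half is genuinely different and in fact somewhat leaner. For ``$\supseteq$'' the paper also transplants a truncated Floquet solution of $\E_{\alpha_1,\ldots,\alpha_p}$ into $\E_\omega$; the only difference is logistical: the paper prunes the full-measure set $\Omega_0$ so that every finite pattern is $\varepsilon$-realized by some $\omega'\in\Omega_0$ at the fixed positions $1,\ldots,k$ (so no translation is needed), whereas you fix one $\omega$ with dense orbit and translate the trial vector to a window around site $n$. One caveat for your variant: an extended CMV matrix is covariant only under \emph{even} shifts (the $\mathcal{L}\mathcal{M}$-factorization has a built-in period-two structure, so $\E_{T\omega}$ is not the conjugate of $\E_\omega$ by the unit shift), so you must take $n$ even --- harmless, since $\nu^{\Z}$ is also $T^2$-ergodic with full support, hence the $T^2$-orbit is a.e.\ dense --- but as written the entrywise comparison near an odd-shifted window would fail. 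For ``$\subseteq$'' the paper argues via generalized eigenfunctions: a Schnol-type theorem gives $\overline{\mathcal{G}_\omega}=\Sigma_\mu$, one truncates a polynomially bounded solution, and a $\liminf$ argument controls the boundary terms both in the truncation and in replacing $\E_\omega$ by the periodized matrix $\E_{\omega_{-N},\ldots,\omega_N}$. You bypass all of this by noting that $\E_\omega$ is bounded (unitary), so Weyl vectors may be truncated to compact support with small loss, and by periodizing over a window with a buffer of a few sites beyond $\supp\psi^{(N)}$ you get the \emph{exact} identity $\E^{(N)}\psi^{(N)}=\E_\omega\psi^{(N)}$, eliminating boundary estimates entirely; both arguments then conclude with $\mathrm{dist}(z,\sigma(\E))=\inf_{\|\psi\|=1}\|(\E-z)\psi\|$ for unitary operators. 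Your route avoids invoking a CMV Schnol theorem (which the paper only obtains by ``trivial modifications'' of the Schr\"odinger proof), at the cost of not producing the generalized-eigenfunction statement that the paper's argument yields as a byproduct; the parity/period-evenness points you flag are indeed just bookkeeping.
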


\begin{proof}
Let $\Omega_0$ denote a set of full $\mu$ measure such that $\sigma(\E_\omega) = \Sigma_\mu$ for every $\omega \in \Omega_0$. Enlarging $\Omega_0$ if necessary, we may assume that $\Omega_0$ is shift-invariant.  After doing this, we may remove a set of zero $\mu$-measure to ensure that the following holds: given any $\varepsilon > 0 $ and any finite sequence $\alpha_1,\ldots,\alpha_k \in \supp(\nu) $, there exists $\omega' \in \Omega_0$ such that $ | \omega'_j - \alpha_j | < \varepsilon $ for all $1 \leq j \leq k$.

The inclusion ``$\supseteq$'' follows from a trial function argument.  Specifically, if $z$ is in the spectrum of $\E_{\alpha_1,\ldots,\alpha_p}$, then there is a Floquet solution $\phi$ so that $\E_{\alpha_1,\ldots,\alpha_p}\phi = z \phi$ and $|\phi_n|$  is bounded; compare \cite[Section~11.2]{S05b}.  By assumption, for each $N \in \Z_+$, there is some $\omega^N \in \Omega_0$ with
$$
|\omega^N_{j+kp} - \alpha_j| < 2^{-N},
\quad
\text{for all } 1 \leq j \leq p, \; 0 \leq k \leq N-1.
$$
Now, take $\phi_N = \phi \chi_{[1,pN]}$. By Floquet theory, we have
\begin{equation} \label{eq:per.weyl}
\lim_{N \to \infty} \frac{\| (\E_{\omega^N}-z) \phi_N\|}{\|\phi_N\|} = 0.
\end{equation}
From \eqref{eq:per.weyl}, we deduce $z \in \Sigma_\mu $, since
\begin{align}\label{eq:dist-to-spectr}
\mathrm{dist}(z,\sigma(\E))
=
\inf_{\| \psi \| = 1} \| (\E - z) \psi \|.
\end{align}
(The identity \eqref{eq:dist-to-spectr} is a standard consequence of the spectral theorem for unitary operators.)

To prove the reverse inclusion, fix $\omega \in \Omega_0$ and let $\mathcal{G}_\omega$ denote the set of generalized eigenvalues of $\mathcal{E}_\omega$, that is, the set of $z$ for which there exists a nonzero polynomially bounded $\phi:\Z \to \C$ for which $\E_\omega \phi = z \phi$. By a standard result,
\begin{equation} \label{eq:schnol:cmv}
\overline{\mathcal{G}_{\omega}} = \Sigma_\mu.
\end{equation}
Specifically, \eqref{eq:schnol:cmv} follows from trivial modifications to the proof of \cite[Theorem~7.1]{K08}; compare \cite[Corollary~2.11]{CFKS}. Given $z \in \mathcal{G}_\omega$, let $\phi$ be a corresponding generalized eigenfunction. For each $N \in \Z_+$, take $\phi_N = \phi \cdot \chi_{[-N,N]}$. Since $\phi_N$ is polynomially bounded,
$$
\liminf_{N \to \infty} \frac{\| (\mathcal{E}_\omega - z) \phi_N \|}{\| \phi_N \|}
=
0.
$$
 But then, if $\mathcal{E}_{\omega,N} = \mathcal{E}_{\omega_{-N},\ldots,\omega_N}$, then
$$
\liminf_{N \to \infty}
\frac{\| (\mathcal{E}_{\omega,N} - z) \phi_N \|}{\| \phi_N \|}
=
0,
$$
which proves that $z$ is in the right hand side of \eqref{eq:as:spectrum}.
\end{proof}

\begin{remark}
In general, the union on the RHS of \eqref{eq:as:spectrum} is rather large and intractable -- one may hope that not all periodic spectra are necessary.  One could ask whether it suffices to only consider spectra of matrices with period one or two.  However, this is insufficient -- for example, take $\alpha = .6,\beta = .9 i$ and let $\nu$ be a Bernoulli distribution supported on $\{\alpha,\beta\}$.  With $ z = e^{1.1i}$, numerical computations show that
$$
z \notin \sigma(\E_\alpha) \cup \sigma(\E_\beta) \cup \sigma(\E_{\alpha,\beta}),
$$
but $z \in \sigma(\E_{\alpha,\alpha,\alpha,\beta})$.  In particular, in this case, the spectra of period two matrices do not cover $\Sigma_\mu$.
\end{remark}

As an immediate corollary of the above theorem, we can relate the almost sure spectrum $\Sigma_\mu$ to the spectra of periodic approximations of $\E_\omega$ for almost every $\omega\in\Omega$ as follows.

\begin{coro}\label{coro:as-spectr}
For any two sequences of natural numbers $\set{l_k}$ and $\set{r_k}$ with $l_k\nearrow\infty$ and $r_k\nearrow\infty$ as $k\nearrow\infty$ we have the following. For $\mu$ almost all $\omega\in\Omega$,
$$
\Sigma_\mu = \overline{\bigcup_k \sigma(\E_{\omega, (l_k, r_k)})},
$$
where $\E_{\omega, (l_k, r_k)}$ is the periodic two-sided CMV matrix formed from the periodic Verblunsky coefficients with the unit cell $(\omega_{-l_k}, \dots, \omega_{r_k})$.
\end{coro}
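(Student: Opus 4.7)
The plan is to prove both inclusions of the claimed identity by combining Theorem~\ref{thm:as-spectr} with an ergodicity argument that forces arbitrarily long near-periodic blocks to appear inside the windows $[-l_k, r_k]$. The easy inclusion $\overline{\bigcup_k \sigma(\E_{\omega, (l_k, r_k)})} \subseteq \Sigma_\mu$ holds for every $\omega \in \Omega$: each unit cell $(\omega_{-l_k}, \ldots, \omega_{r_k})$ consists of elements of $\supp \nu$, so Theorem~\ref{thm:as-spectr} places each periodic spectrum $\sigma(\E_{\omega, (l_k, r_k)})$ inside the closed set $\Sigma_\mu$.

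For the reverse inclusion, by Theorem~\ref{thm:as-spectr} combined with continuity of the spectrum of a periodic CMV matrix (in Hausdorff distance) as a function of its Verblunsky coefficients, it is enough to show that for $\mu$-almost every $\omega$, every $z \in \sigma(\E_{\alpha_1, \ldots, \alpha_p})$ belongs to $\overline{\bigcup_k \sigma(\E_{\omega, (l_k, r_k)})}$ as $(\alpha_1, \ldots, \alpha_p)$ ranges over a fixed countable dense subset of $\bigsqcup_p (\supp \nu)^p$. Fix such $(\alpha_1, \ldots, \alpha_p)$ and $z$. By Floquet theory there is a bounded, nontrivial $\phi:\Z \to \C$ with $\E_{\alpha_1, \ldots, \alpha_p}\phi = z \phi$, and the truncations $\phi_M := \phi \cdot \chi_{[1,pM]}$ satisfy $\|(\E_{\alpha_1, \ldots, \alpha_p} - z)\phi_M\|/\|\phi_M\| \to 0$ as $M \to \infty$, exactly as in the proof of Theorem~\ref{thm:as-spectr}.

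Ergodicity enters as follows. For any $\delta > 0$ and $M \in \Z_+$, the cylinder $\{\omega : |\omega_j - \alpha_{1 + ((j-1)\bmod p)}| < \delta,\ 1 \leq j \leq pM\}$ has positive $\mu$-measure, so by ergodicity of the Bernoulli shift $(\Omega, T, \mu)$, $\mu$-almost every $\omega$ visits it along $T^n \omega$ for arbitrarily large $n$. Diagonalizing over the countable dense set of finite patterns and over sequences $\delta_m \searrow 0$, $M_m \nearrow \infty$ yields a single full-measure set $\Omega_0$ on which all such near-recurrences hold simultaneously. Given $\omega \in \Omega_0$ and $\varepsilon > 0$, I would first choose $M$ large enough that $\|(\E_{\alpha_1, \ldots, \alpha_p} - z)\phi_M\|/\|\phi_M\| < \varepsilon/2$, then $\delta$ small, then $n$ with $(\omega_{n+1}, \ldots, \omega_{n+pM})$ $\delta$-close to the periodic pattern; since $l_k, r_k \nearrow \infty$, one can pick $k$ with $[-l_k, r_k] \supset [n+1, n+pM]$. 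The translate $\tilde\phi_M(j) := \phi_M(j-n)$ is supported strictly inside $[-l_k, r_k]$, and a short perturbation estimate based on the uniformly Lipschitz dependence of CMV matrix entries on Verblunsky coefficients that remain in a compact subset of $\D$ gives
$$
\frac{\|(\E_{\omega, (l_k, r_k)} - z) \tilde\phi_M\|}{\|\tilde\phi_M\|} \leq \frac{\|(\E_{\alpha_1, \ldots, \alpha_p} - z) \phi_M\|}{\|\phi_M\|} + O(\delta) < \varepsilon
$$
for $\delta$ small enough. By \eqref{eq:dist-to-spectr} this places $z$ within $\varepsilon$ of $\sigma(\E_{\omega, (l_k, r_k)})$, and since $\varepsilon$ was arbitrary, $z$ lies in the desired closure.

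The argument is structurally parallel to the proof of Theorem~\ref{thm:as-spectr}; the new ingredient is the ergodic recurrence producing a near-periodic block specifically inside the prescribed window $[-l_k, r_k]$, which is exactly where the assumption $l_k, r_k \nearrow \infty$ is needed. The main technical obstacle I foresee is the careful alignment of $\tilde\phi_M$ with the periodic structure of the CMV matrix, including parity considerations tied to the two-periodic $\Theta$-block decomposition of $\E$ (one may, for instance, restrict to $n$'s that are even multiples of $p$ in the recurrence step); this is bookkeeping that does not affect the substance of the argument.
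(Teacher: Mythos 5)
Your proposal is correct, but for the hard inclusion $\Sigma_\mu \subseteq \overline{\bigcup_k \sigma(\E_{\omega,(l_k,r_k)})}$ it takes a genuinely different, and heavier, route than the paper. The paper never returns to Floquet solutions or recurrence: it fixes a typical $\omega$ with $\sigma(\E_\omega)=\Sigma_\mu$, uses \eqref{eq:dist-to-spectr} to pick a single unit vector $\phi_{n_0}$ with $\norm{(\E_\omega-z)\phi_{n_0}}$ small, notes that $\E_{\omega,(l_k,r_k)}\to\E_\omega$ strongly (the coefficients agree on $[-l_k,r_k]$ and $l_k,r_k\nearrow\infty$), and concludes $\mathrm{dist}(z,\sigma(\E_{\omega,(l_k,r_k)}))<\varepsilon$ for large $k$, again by \eqref{eq:dist-to-spectr}; the reverse inclusion is read off from Theorem~\ref{thm:as-spectr} exactly as you do. You instead re-derive the hard inclusion from the periodic characterization \eqref{eq:as:spectrum}: ergodic recurrence plants $\delta$-near copies of $(\alpha_1,\dots,\alpha_p)^M$ inside the windows and you transplant truncated Floquet solutions, diagonalizing over a countable dense family of patterns (legitimate, since the fixed-period spectra depend Hausdorff-continuously on the coefficients -- the operators are unitary and vary norm-continuously). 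This works, but at the cost of machinery the paper avoids: pattern-dependent full-measure sets, the parity alignment you flag (recurrence along even shifts is fine because the Bernoulli shift is mixing, so $T^2$, or $T^{2p}$, is still ergodic), and a perturbation estimate in which, strictly speaking, the coefficients within distance two of the planted block's edges are not controlled by your cylinder, so the right-hand side of your displayed inequality should carry an additional boundary term of order $\norm{\phi}_\infty/\norm{\phi_M} = O(M^{-1/2})$ -- harmless, since $M$ is chosen large first. What your approach buys is a self-contained mechanism showing exactly where $l_k,r_k\nearrow\infty$ and ergodicity enter; what the paper's buys is brevity: strong convergence of the periodizations to $\E_\omega$ plus the distance formula \eqref{eq:dist-to-spectr} replaces all of the recurrence and transplantation bookkeeping, and uses ergodicity only through the almost sure identity $\sigma(\E_\omega)=\Sigma_\mu$ already in place.
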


\begin{proof}
Fix any $\omega\in\Omega$ and $z\in \sigma(\E_\omega)$. Let us first prove that for any $\varepsilon > 0$, there exists $k\in\N$ such that $\mathrm{dist}(z, \sigma(\E_{\omega, (l_k, r_k)})) < \varepsilon$.

Due to \eqref{eq:dist-to-spectr} we can fix a sequence of unit vectors $\set{\phi_n}\subset\ell^2$ such that $\norm{(\E_\omega - z)\phi_n}\rightarrow 0$ as $n\rightarrow\infty$. Notice that $\E_{\omega, (l_k, r_k)}$ converges to $\E_\omega$ in the strong operator topology. Therefore, there exists $n_0\in \N$ such that for all $k$ sufficiently large, $\norm{(\E_{\omega, (l_k, r_k)}-z)\phi_{n_0}} < \varepsilon$. We obtain the desired bound $\mathrm{dist}(z, \sigma(\E_{\omega, (l_k, r_k)}))<\varepsilon$ by \eqref{eq:dist-to-spectr}.

Since for $\mu$-almost all $\omega$, we have $\sigma(\E_\omega)=\Sigma_\mu$, by Theorem~\ref{thm:as-spectr} we obtain, for $\mu$-almost all $\omega \in \Omega$,
$$
\Sigma_\mu \subseteq \overline{\bigcup_k \sigma(\E_{\omega, (l_k, r_k)})}\subset \overline{
\bigcup_{p \in \Z_+} \bigcup_{\alpha_1, \ldots, \alpha_p \in \supp(\nu)} \sigma(\mathcal{E}_{\alpha_1,\ldots,\alpha_p})} = \Sigma_\mu,
$$
and the corollary is proved.
\end{proof}

As a consequence of Corollary \ref{coro:as-spectr} we can now approximate the almost sure spectrum by the zeros of the discriminant. This will play an important role in Section \ref{sec:app}.

In what follows, by $\mathrm{dist}_H(\cdot, \cdot)$ we denote the Hausdorff distance on $\partial \mathbb{D}$ (not to be confused with $\mathrm{dist}(\cdot, \cdot)$ that was used in Theorem \ref{thm:as-spectr} and Corollary \ref{coro:as-spectr}). With the notation from Corollary \ref{coro:as-spectr}, let us denote by $\Delta^{(k)}_\omega(z)$ the discriminant over $[-l_k, r_k]\cap \Z$ (i.e. the trace of the product $\prod_{j=r_k}^{-l_k}A(\omega_j, z)$, where $A(\omega_j, z)$ is the Szeg\H{o} cocycle). Denote the set of zeros of $\Delta^{(k)}_\omega$ by $\mathcal{Z}_\omega^{(k)}$.

\begin{theorem}\label{thm:approx}
For $\mu$ almost every $\omega\in\Omega$ we have $\lim_{k\rightarrow\infty}\mathrm{dist}_H(\mathcal{Z}_\omega^{(k)}, \Sigma_\mu)= 0.$
\end{theorem}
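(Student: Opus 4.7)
The plan is to bound the two sides of $\mathrm{dist}_H(\mathcal{Z}^{(k)}_\omega, \Sigma_\mu)$ separately. Write $p_k = l_k + r_k + 1$ and let $M^{(k)}_\omega(z) = A(\omega_{r_k},z)\cdots A(\omega_{-l_k},z)$ be the monodromy matrix, so that $\Delta^{(k)}_\omega(z) = \tr M^{(k)}_\omega(z)$ and $\det M^{(k)}_\omega(z) = z^{p_k}$.

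First I would dispose of the inclusion $\mathcal{Z}^{(k)}_\omega \subseteq \Sigma_\mu$, which forces $\sup_{z\in\mathcal{Z}^{(k)}_\omega}\mathrm{dist}(z,\Sigma_\mu)=0$. Indeed, if $\Delta^{(k)}_\omega(z_0)=0$ with $z_0\in\partial\D$, the eigenvalues of $M^{(k)}_\omega(z_0)$ are $\pm i z_0^{p_k/2}$, both unimodular, so the associated Floquet solutions remain bounded and $z_0 \in \sigma(\E_{\omega,(l_k,r_k)})$. Since $\omega_{-l_k},\dots,\omega_{r_k}\in\supp(\nu)$, Theorem~\ref{thm:as-spectr} places $\sigma(\E_{\omega,(l_k,r_k)})$ inside $\Sigma_\mu$, finishing this direction.

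The substantive half is the reverse: every $z \in \Sigma_\mu$ must lie within $\varepsilon$ of some zero of $\Delta^{(k)}_\omega$ once $k$ is sufficiently large. For this I would invoke the density of states measure $\rho$ of the ergodic family $(\E_\omega)$, which for $\mu$-almost every $\omega$ is a fixed probability measure on $\partial\D$ with $\supp\rho = \Sigma_\mu$, together with the weak convergence of the normalized zero-counting measures $\frac{1}{p_k}\sum_{z_j\in\mathcal{Z}^{(k)}_\omega}\delta_{z_j}$ to $\rho$. Granted these, a finite covering of $\Sigma_\mu$ by arcs of radius $\varepsilon$ centered at points of $\Sigma_\mu$ -- each necessarily of positive $\rho$-mass -- together with weak convergence produces a single threshold $K(\varepsilon)$ beyond which every arc in the cover meets $\mathcal{Z}^{(k)}_\omega$; this yields the required $\varepsilon$-denseness.

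The main obstacle will be the Szeg\H{o}/Avron--Simon-type input used in the last step, namely the weak convergence of zero-counting measures to $\rho$ and the identification $\supp\rho = \Sigma_\mu$. Both are standard in the OPUC/CMV setting: the first follows from the Thouless formula relating the logarithmic potential of $\rho$ to the Lyapunov exponent combined with the convergence of $p_k^{-1}\log|\Delta^{(k)}_\omega(z)|$ to the Lyapunov exponent off the spectrum, and the second is the direct analogue of the Schr\"odinger result. A density-of-states-free alternative would be to first upgrade Corollary~\ref{coro:as-spectr} to Hausdorff convergence $\mathrm{dist}_H(\sigma(\E_{\omega,(l_k,r_k)}), \Sigma_\mu) \to 0$ via strong operator convergence and compactness of $\Sigma_\mu$, and then to show directly that the zeros of $\Delta^{(k)}_\omega$ become dense inside each band of $\sigma(\E_{\omega,(l_k,r_k)})$; the uniform ``band-filling'' estimate required for the latter is the delicate technical point in that approach.
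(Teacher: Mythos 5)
Your first inclusion ($\mathcal{Z}^{(k)}_\omega \subseteq \Sigma_\mu$) is essentially the paper's: zeros of the discriminant lie in $\sigma(\E_{\omega,(l_k,r_k)})$, which lies in $\Sigma_\mu$ by Theorem~\ref{thm:as-spectr} (the paper routes this through Corollary~\ref{coro:as-spectr} and \cite[Theorem~11.1.1]{S05b}). One small point you should make explicit: for the Hausdorff-distance statement you need that \emph{all} zeros of $\Delta^{(k)}_\omega$ lie on $\partial\D$; your Floquet argument only treats a zero already assumed to be on the circle. This is exactly what \cite[Theorem~11.1.1]{S05b} provides (each of the $p_k$ bands contains precisely one of the $p_k$ zeros), so it is a citation away, but as written it is assumed rather than argued.

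For the substantive direction your main route is genuinely different from the paper's and considerably heavier. You invoke (i) existence of the density of states with $\supp\rho=\Sigma_\mu$ and (ii) weak convergence of the normalized discriminant-zero counting measures to $\rho$, for a.e.\ $\omega$ and for \emph{arbitrary} window sequences $l_k,r_k$. Item (i) is standard ergodic CMV theory, but (ii) is not an off-the-shelf citation: the naive ``zeros converge to the DOS'' statement is notoriously false for OPUC zeros, and while it does hold for discriminant zeros (one per band, on the circle), establishing it requires the Thouless-formula/potential-theoretic argument you only gesture at (convergence of $p_k^{-1}\log\abs{\Delta^{(k)}_\omega(z)}$ to $L(z)$ off the circle, upper-envelope arguments, unicity of measures from their potentials). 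So your main route is plausible and could be completed, but its key input is left at the level of a sketch. The paper avoids the DOS entirely and is both shorter and quantitative: fix a fine cover of $\Sigma_\mu$; by the proof of Corollary~\ref{coro:as-spectr} (strong convergence of $\E_{\omega,(l_k,r_k)}$ to $\E_\omega$) every point of $\Sigma_\mu$ is eventually within $\varepsilon/4$ of $\sigma(\E_{\omega,(l_k,r_k)})$; each band of that periodic spectrum contains a zero of $\Delta^{(k)}_\omega$ and, by \cite[Lemma~5]{O12}, has length at most $2\pi/(l_k+r_k)$, which gives $\varepsilon$-denseness of the zeros in $\Sigma_\mu$. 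Note that this is precisely your ``density-of-states-free alternative'': the ``uniform band-filling estimate'' you flag as the delicate point is supplied in the paper by Ong's explicit band-length bound, so that alternative is in fact the easier of your two routes once that reference is in hand.
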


\begin{proof}
By Corollary \ref{coro:as-spectr}, for $\mu$-almost every $\omega\in\Omega$ and any $k$, we have $\sigma(\E_{\omega, (l_k, r_k)})\subset \Sigma_\mu$. On the other hand, for all $k$, the spectrum of $\E_{\omega, (l_k, r_k)}$ is a union of compact arcs in $\partial\mathbb{D}$ containing $\mathcal{Z}_\omega^{(k)}$ (these arcs may intersect only at the endpoints), each of which contains precisely one point from $\mathcal{Z}_\omega^{(k)}$ in its interior \cite[Theorem 11.1.1]{S05b}. Thus it remains to prove that no point of $\Sigma_\mu$ remains (uniformly in $k$) away from $\sigma(\E_{\omega, (l_k, r_k)})$.

Fix $\varepsilon > 0$. Let $\set{C_j}_{j=1,\dots,m}$ be a finite open cover of $\Sigma_\mu$ such that for all $j$, the diameter $\mathrm{diam}(C_j) < \varepsilon/4$. Take a typical $\omega\in\Omega$ (i.e. such that $\sigma(\E_\omega) = \Sigma_\mu$). Fix $j\in\set{1, \dots, m}$ and $z\in C_j$. From the proof of Corollary \ref{coro:as-spectr}, we know that $\mathrm{dist}(z, \sigma(\E_{\omega, (l_k, r_k)})) < \varepsilon/4$ for all sufficiently large $k$.  On the other hand, Lemma~5 in \cite{O12} guarantees that the length of each such arc is bounded by $2\pi/(l_k+r_k)$, which can be made smaller than $\varepsilon/4$ for all $k$ sufficiently large.
\end{proof}

We will need the following result in Section~\ref{sec:app}; let us record it here for completeness as a theorem.

\begin{theorem}\label{thm:approx-periodic}
With the notation from Corollary \ref{coro:as-spectr}, we have the following. If $\omega\in\Omega$ is periodic, then for any choice of sequences of natural numbers $l_k$ and $r_k$ as in Corollary \ref{coro:as-spectr}, we have $\lim_{k\rightarrow\infty}\mathrm{dist}_H(\mathcal{Z}_\omega^{(k)}, \sigma(\E_\omega))=0$.
\end{theorem}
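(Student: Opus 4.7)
The plan is to mirror the proof of Theorem~\ref{thm:approx} with $\sigma(\E_\omega)$ playing the role of $\Sigma_\mu$. As before, \cite[Theorem~11.1.1]{S05b} puts exactly one point of $\mathcal{Z}_\omega^{(k)}$ in each band of $\sigma(\E_{\omega,(l_k,r_k)})$, and \cite[Lemma~5]{O12} bounds each band's diameter by $2\pi/Q_k$ with $Q_k:=l_k+r_k+1$, so that $\mathrm{dist}_H(\mathcal{Z}_\omega^{(k)},\sigma(\E_{\omega,(l_k,r_k)}))\le 2\pi/Q_k\to 0$. Via the triangle inequality for $\mathrm{dist}_H$, the theorem then reduces to proving $\mathrm{dist}_H(\sigma(\E_{\omega,(l_k,r_k)}),\sigma(\E_\omega))\to 0$, which I would establish as two one-sided inclusions.

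For the inclusion $\sigma(\E_\omega)\subset B_\varepsilon(\sigma(\E_{\omega,(l_k,r_k)}))$ (for large $k$), I would copy the trial-vector argument from the proof of Corollary~\ref{coro:as-spectr}: given $z\in\sigma(\E_\omega)$, build a Weyl sequence $\phi_N$ from a bounded Floquet eigenfunction of $\E_\omega$ truncated to $[-N,N]$, and use the fact that $\E_\omega$ and $\E_{\omega,(l_k,r_k)}$ share Verblunsky coefficients on $[-l_k,r_k]$. Because the CMV recursion is five-diagonal, whenever $N$ lies at distance at least $3$ from the endpoints $\pm l_k,r_k$ one has the pointwise identity $(\E_{\omega,(l_k,r_k)}-z)\phi_N=(\E_\omega-z)\phi_N$, and then \eqref{eq:dist-to-spectr} yields $\mathrm{dist}(z,\sigma(\E_{\omega,(l_k,r_k)}))<\varepsilon$ for all large $k$, uniformly in $z$ via a finite-cover argument on the compact set $\sigma(\E_\omega)$.

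For the reverse inclusion $\sigma(\E_{\omega,(l_k,r_k)})\subset B_\varepsilon(\sigma(\E_\omega))$, I would pick $z_k\in\sigma(\E_{\omega,(l_k,r_k)})$ and, using Floquet--Bloch theory for the $Q_k$-periodic approximant together with the $U(1,1)$ structure of the Szeg\H{o} cocycle, produce a quasi-periodic eigensolution $\phi^{(k)}$ of $\E_{\omega,(l_k,r_k)}\phi^{(k)}=z_k\phi^{(k)}$ whose amplitude $|\phi^{(k)}_n|$ obeys uniform two-sided bounds. Truncating $\psi^{(k)}:=\phi^{(k)}\chi_{[-l_k,r_k]}$, only the $O(1)$ boundary indices near $\pm l_k,r_k$ contribute to $(\E_\omega-z_k)\psi^{(k)}$ (since the two operators agree on the window modulo five-diagonal boundary effects), so $\|(\E_\omega-z_k)\psi^{(k)}\|$ stays bounded in $k$ while $\|\psi^{(k)}\|^2\ge cQ_k$. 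The ratio is then $O(Q_k^{-1/2})$, and \eqref{eq:dist-to-spectr} gives $\mathrm{dist}(z_k,\sigma(\E_\omega))\to 0$; a standard subsequence extraction upgrades this to uniform convergence.

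The principal technical obstacle is securing the uniform two-sided bounds on the Floquet eigenfunction $\phi^{(k)}$, particularly the lower bound ensuring $\|\psi^{(k)}\|^2\ge cQ_k$, i.e.\ that $|\phi^{(k)}_n|$ does not concentrate on a vanishing fraction of the period as $k\to\infty$. For band-interior $z_k$ the monodromy $T^{(k)}_\omega(z_k)$ is elliptic with well-separated eigenvectors and the bounds follow from continuity of the Bloch data; the delicate regime is $z_k$ drifting toward a band edge of $\sigma(\E_{\omega,(l_k,r_k)})$, where the monodromy becomes parabolic and the naive Floquet normalization can degenerate. This case can either be treated directly through a parabolic-monodromy analysis or circumvented by perturbing $z_k$ slightly into a neighboring band interior, which is harmless because the bands already have diameter $O(1/Q_k)$.
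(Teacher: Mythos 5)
Your first inclusion (every point of $\sigma(\E_\omega)$ is eventually $\varepsilon$-close to $\sigma(\E_{\omega,(l_k,r_k)})$, plus the band-length bound from \cite[Lemma~5]{O12} and the one-zero-per-band fact) matches the paper. The gap is in the reverse inclusion. Your plan hinges on producing, for an arbitrary $z_k\in\sigma(\E_{\omega,(l_k,r_k)})$, a Floquet solution of the $Q_k$-periodic approximant whose amplitude obeys two-sided bounds \emph{uniform in $k$}, so that the truncation $\psi^{(k)}$ satisfies $\|\psi^{(k)}\|^2\ge cQ_k$ while the boundary errors stay $O(1)$. But this non-concentration property fails exactly in the regime you need it: if $z_k$ stays in a compact subinterval $J$ of a gap of $\sigma(\E_\omega)$, then on the whole window $[-l_k,r_k]$ the Verblunsky coefficients are those of the fixed $p$-periodic sequence $\omega$, and the $p$-step Szeg\H{o} transfer matrix is uniformly hyperbolic on $J$ (its trace has modulus $\ge M>2$ there). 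Hence \emph{every} nonzero solution of the difference equation at $z_k$ has amplitude varying across the window by a factor exponentially large in $Q_k$; the Floquet solution of the big period is no exception (its modulus is $Q_k$-periodic but dips/grows exponentially inside one period), so no uniform two-sided bound is available, the mass of $\psi^{(k)}$ can concentrate on $O(1)$ sites near the window edges, and the quotient $\|(\E_\omega-z_k)\psi^{(k)}\|/\|\psi^{(k)}\|$ need not tend to $0$. In other words, the estimate $\|\psi^{(k)}\|^2\ge cQ_k$ is essentially equivalent to already knowing that $z_k$ is not in a gap of $\sigma(\E_\omega)$, which makes the argument circular. Your diagnosis of the difficulty is also off target: the delicate case is not $z_k$ near a band edge of the approximant (parabolic $Q_k$-monodromy), and ellipticity of the $Q_k$-monodromy for band-interior $z_k$ gives no uniform-in-$k$ control, since the period length itself tends to infinity and there is no fixed Bloch family on which to invoke continuity/compactness; perturbing $z_k$ within a band does not help either.

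The missing idea, and what the paper does instead, is to exploit the periodicity of $\omega$ through the discriminant rather than through eigenfunctions: arguing by contradiction, if $z_k\in\sigma(\E_{\omega,(l_k,r_k)})$ remained in a compact subinterval $J$ of a gap of $\sigma(\E_\omega)$, then $|\Delta_p|>M>2$ uniformly on $J$ forces the window discriminant $\Delta^{(k)}$ (a product of many copies of the hyperbolic $p$-step monodromy times bounded leftover factors) to exceed $2$ in modulus on $J$ for all large $k$; since $\sigma(\E_{\omega,(l_k,r_k)})$ is exactly the set where $|\Delta^{(k)}|\le 2$, this is a contradiction. If you want to salvage your eigenfunction-based route, you would have to prove a statement of this hyperbolic-growth type anyway, at which point the trace argument is both shorter and avoids the parabolic/band-edge bookkeeping entirely.
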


\begin{proof}
That no point of $\sigma(\E_\omega)$ remains (uniformly in $k$) away from $\sigma(\E_{\omega, (l_k, r_k)})$ follows by the same arguments as those in the proof of Theorem \ref{thm:approx}.

Next we prove that the spectra $\sigma(\E_{\omega, (l_k, r_k)})$ do not contain points which remain (as $k\nearrow\infty$) away from $\sigma(\E_\omega)$. Assume to the contrary that there exists $\delta > 0$ and, for every $k$, there exists $z_k\in \sigma(\E_{\omega, (l_{k}, r_{k})})$ such that $\mathrm{dist}(z, \sigma(\E_\omega)) > \delta$ (actually, to be pedantic, we should be taking a subsequence $\set{k_j}$, but that would only reduce to considering the new sequences $l_{k_j}$ and $r_{k_j}$ instead of $l_k$ and $r_k$). It follows that for all $k$, $z_k$ belongs to a spectral gap of $\sigma(\E_\omega)$. Without loss of generality (after passing to a subsequence if necessary) we may assume that there exists a compact subinterval $J$ of a spectral gap of $\sigma(\E_\omega)$ with its boundary points at least distance $\delta$ away from the boundary of the gap, such that all the $z_k$'s belong to $J$.

Let us assume that $\omega$ is periodic with period $p$. For any $j$, let us write $\Delta_j(z)$ for the discriminant over $[1, j]\cap\Z$. There exists $M > 2$ such that for all $z\in J$, $\abs{\Delta_p(z)} > M$ (see, e.g., \cite[Theorems~11.1.1 and 11.1.2]{S05b}). It follows that with $k=np$, $n\in\Z_+$, $\abs{\Delta_k(z)}$ is exponentially large in $n$. Thus for general and large $k$, the discriminant evaluated at $z\in J$ over $[l_k, r_k]\cap \Z$ is larger than 2 in absolute value, precluding containment of $z$ in $\sigma(E_{\omega, (l_k, r_k)})$.
\end{proof}

\section{An Application to the Ising Model}\label{sec:app}

  In all of what follows, we assume the notation from Section \ref{s.intro}, where the Ising model was introduced.

It turns out that the zeros of $Z^{(N)}_\tau(h)$ are precisely the zeros of the discriminant $\Delta_N(h):=\mathrm{Tr}\prod_{j=N}^1 A(\alpha_j, h)$, with Verblunsky coefficients given by $\alpha_j=1/\beta_j$; see \cite{DMY13} for details.

Notice that for all $j$, $\alpha_j = 1/\beta_j$ belongs to $(0, 1)$. Furthermore, if the sequence $\set{J_j}$ is bounded, then the Verblunsky coefficients all lie away from zero. Thus, by the results of Section \ref{s.uh}, at least in measure, the zeros of $Z_\tau^{(N)}$ in the thermodynamic limit all lie away from $1$. However, it may happen that topologically the zeros still accumulate somewhere in the spectral gap (perhaps even at $1$). The following theorem precludes this.

\begin{theorem}\label{thm:tdlimit-gap}
With a choice of {\rm (}bounded{\rm )} ferromagnetic couplings $\set{J_j}$, $\tau > 0$, and the associated Verblunsky coefficients $\alpha_j = 1/\beta_j$, let $\alpha = \inf_{j}\set{\alpha_j}$ and $R_\alpha$ the spectral gap as in Definition \ref{def:spectral-gap}. Then for any $N \in \Z_+$, $Z_\tau^{(N)}$ does not have a zero in $R_\alpha$.
\end{theorem}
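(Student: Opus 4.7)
\medskip
\noindent\textbf{Proof proposal.} I would prove this by contradiction, combining the uniform hyperbolicity of Theorem~\ref{t.uniformgrowth} with a periodic extension argument that converts a hypothetical zero of the discriminant into a bounded orbit.

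First I would observe that boundedness of $\set{J_j}$ yields
\[
\alpha_j \geq \exp\!\bigl(-2 \sup_j J_j / (k_B \tau)\bigr) > 0,
\]
so $\set{\alpha_j} \subset [\alpha,1)$ with $\alpha > 0$. Hence the sequence $\set{\alpha_j}$, as well as any periodic extension of it, satisfies the hypothesis of Theorem~\ref{t.uniformgrowth} with $A = \alpha$.

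The core step is the following. Suppose for contradiction that $z \in R_\alpha$ is a zero of $\Delta_N$, and set
\[
M_N = \prod_{j=N}^1 A(\alpha_j, z).
\]
A direct computation from \eqref{e.tmonestep} gives $\det A(\alpha_j, z) = z$, so $\det M_N = z^N$, which has modulus one. The assumption $\mathrm{Tr}(M_N) = \Delta_N(z) = 0$ then forces both eigenvalues of $M_N$ to lie on the unit circle, from which it follows that $\norm{M_N^k}$ grows at most linearly in $k$.

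To derive the contradiction, I would extend $(\alpha_1, \dots, \alpha_N)$ to a periodic sequence $\set{\tilde{\alpha}_n}_{n \in \Z_+} \subset [\alpha,1)$ by setting $\tilde\alpha_{n+N} = \tilde\alpha_n$. Then the transfer matrix product over the first $kN$ sites is precisely $M_N^k$, while Theorem~\ref{t.uniformgrowth} (applied with $A = \alpha$) produces constants $C > 0$ and $\lambda > 1$ such that $\norm{M_N^k} \geq C \lambda^{kN}$ for every $k \in \Z_+$. Taking $k$ large contradicts the at-most-linear growth and finishes the proof. I do not expect any genuine obstacle here: the heart of the matter -- exponential growth uniform in both the coefficients and the parameter $z \in R_\alpha$ -- is already contained in Theorem~\ref{t.uniformgrowth}, and the remaining ingredients are a routine determinant identity together with the periodic extension trick.
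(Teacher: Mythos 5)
Your proposal is correct and follows essentially the same route as the paper: a zero of $\Delta_N$ in $R_\alpha$ makes the one-period transfer matrix have trace zero and unimodular determinant (the paper calls it elliptic), so its powers stay bounded, which contradicts the uniform exponential growth of Theorem~\ref{t.uniformgrowth} applied to the $N$-periodic extension of $\alpha_1,\dots,\alpha_N$. The only cosmetic difference is that your "at most linear growth" can be sharpened to boundedness, since the two eigenvalues are distinct and hence the matrix is diagonalizable, but this does not affect the argument.
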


\begin{proof}
This follows immediately from the fact that the zeros of $Z_\tau^{(N)}$ coincide with those of $\Delta_N$, and an application of Theorem~\ref{t.uniformgrowth}. Namely, if for some $N \in \Z_+$, $\Delta_N$ has a zero $z$ in $R_\alpha$, then the matrix $A(\alpha_N, z) \cdots A(\alpha_1,z)$ is elliptic, and hence its powers are bounded. This contradicts the statement in Theorem~\ref{t.uniformgrowth} for the $z$ in question and the periodic sequence $\alpha_1, \ldots, \alpha_N, \alpha_1, \ldots, \alpha_N, \alpha_1, \ldots$.
\end{proof}

Theorem~\ref{t.spectralgap}, stated in Section~\ref{s.intro}, readily follows from Theorem~\ref{thm:tdlimit-gap}.

\bigskip


The following theorem shows that the zeros of the partition function in the thermodynamic limit accumulate on the spectrum of the associated two-sided CMV matrix, independently of the way the limit is taken.

\begin{theorem}\label{thm:tdlimit-top-1}
Fix $K > 0$. Let $\hat{\Omega}=(0, K]^{\Z}$ {\rm (}we equip $\hat{\Omega}$ with the product topology induced by the standard topology on $(0, K]${\rm )}. Assume that $\nu$ is a probability measure supported on $(0, K]$, and $\hat{\nu}$ is the induced product measure on $\hat{\Omega}$. Then for $\hat{\nu}$ almost every sequence of nearest neighbor interaction couplings $J=(\dots, J_{-1}, J_0, J_1, \dots)\in\hat{\Omega}$ and any $\tau > 0$, the following holds.

Fix any two sequences $N^l_k$, $N^r_k$ of natural numbers, with $N^l_k\nearrow\infty$ and $N^r_k \nearrow \infty$ as $k \nearrow \infty$. If $Z^{(k)}_\tau$ is the partition function of the Ising model on the lattice $[-N^l_k, N^r_k]\cap\Z$ with interaction $(J_{-N^l_k}, \dots, J_{N^r_k})$ and temperature $\tau > 0$, and if $\mathcal{Z}^{(k)}$ denotes its zeros, then
\begin{align}\label{eq:tdlimit-top-1}
\lim_{k \rightarrow \infty} \mathrm{dist}_H (\mathcal{Z}^{(k)}, \E_\omega) = 0,
\end{align}
where $\mathrm{dist}_H (\cdot, \cdot)$ is the Hausdorff metric on $\partial\mathbb{D}$, and $\E_\omega$ is the extended CMV matrix with Verblunsky coefficients $\omega = e^{-2J/k_B\tau}$.
\end{theorem}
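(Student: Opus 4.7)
The plan is to reduce this statement to a direct application of Theorem~\ref{thm:approx} via the Ising-to-CMV change of variables. Let $\Phi:(0,K]\to[e^{-2K/(k_B\tau)},1)$ be the homeomorphism $J\mapsto e^{-2J/(k_B\tau)}$, and denote its coordinatewise extension by $\Phi:\hat\Omega\to\Omega$, where $\Omega:=\Phi((0,K])^{\Z}$. The pushforward $\mu:=\Phi_*\hat\nu$ is the product measure $\tilde\nu^{\Z}$ with $\tilde\nu:=\Phi_*\nu$, and $\supp\tilde\nu\subset\D$ is in particular bounded away from $1$, so the framework of Section~\ref{s.randomCMV} applies to the random extended CMV matrices with coefficients $\omega=\Phi(J)$.

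As was already used in Theorem~\ref{thm:tdlimit-gap} and the paragraph preceding it, the result of \cite{DMY13} identifies the zeros of the partition function $Z_\tau^{(k)}$ with the zeros of the discriminant $\Delta^{(k)}_\omega(z)=\mathrm{Tr}\prod_{j=N^r_k}^{-N^l_k}A(\omega_j,z)$ associated with the window $[-N^l_k,N^r_k]\cap\Z$ and $\omega_j=\Phi(J_j)$. In the notation of Theorem~\ref{thm:approx} with $l_k=N^l_k$ and $r_k=N^r_k$, this reads $\mathcal{Z}^{(k)}=\mathcal{Z}^{(k)}_\omega$, so Theorem~\ref{thm:approx} applied to $\mu$ gives, for $\mu$-a.e.\ $\omega$,
$$
\lim_{k\to\infty}\mathrm{dist}_H\bigl(\mathcal{Z}^{(k)}_\omega,\Sigma_\mu\bigr)=0.
$$

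Finally, by the general theory of ergodic extended CMV matrices (\cite[Theorem~10.16.2]{S05b}; compare the discussion preceding Theorem~\ref{t.asspositivevc2}), $\sigma(\E_\omega)=\Sigma_\mu$ for $\mu$-a.e.\ $\omega$. Intersecting these two full-measure subsets of $\Omega$ and pulling back through $\Phi$ yields a subset of $\hat\Omega$ of full $\hat\nu$-measure on which $\sigma(\E_\omega)=\Sigma_\mu$ and $\mathrm{dist}_H(\mathcal{Z}^{(k)}_\omega,\sigma(\E_\omega))\to 0$, which is exactly \eqref{eq:tdlimit-top-1}. There is no real obstacle here beyond bookkeeping: essentially all of the analytic work has already been done in Theorem~\ref{thm:approx}, and the substance of the present proof is to recognize that the statement is that theorem transported to the Ising side through the exponential change of variables, together with the ergodicity identification of $\sigma(\E_\omega)$ with $\Sigma_\mu$.
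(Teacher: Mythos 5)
Your proposal is correct and follows essentially the same route as the paper: transport the measure via $J\mapsto e^{-2J/(k_B\tau)}$, identify the Lee--Yang zeros with the zeros of the discriminant via \cite[Proposition~3.1]{DMY13}, and invoke Theorem~\ref{thm:approx}. The only difference is that you spell out the almost-sure identification $\sigma(\E_\omega)=\Sigma_\mu$ and the intersection of full-measure sets, which the paper leaves implicit.
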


\begin{proof}
Let us define $\Omega:=[e^{-2K/k_B\tau}, 1)^{\Z}$, taken with the product topology. Observe that $\hat{\nu}$ induces, via the map $x\mapsto e^{-x}$, a measure $\mu$ on $\Omega$. Furthermore, \cite[Proposition~3.1]{DMY13} extends in an obvious way, so that the set $\mathcal{Z}^{(k)}$ coincides with the zeros of $\Delta^{(k)}(z):=\mathrm{Tr}\prod_{j=N_k^r}^{-N_k^l}A(\omega_j, z)$. Thus it is enough to prove that for $\mu$ almost every $\omega\in\Omega$, the zeros of $\Delta^{(k)}$ accumulate on the spectrum of $\E_\omega$ as $k\nearrow\infty$. But this is precisely Theorem \ref{thm:approx}.
\end{proof}

\begin{remark}
Notice that since the spectrum as a set is almost surely constant in $\omega$, Theorem \ref{thm:tdlimit-top-1} implies that the topological limit distribution of the zeros is almost surely constant in the interaction couplings.
\end{remark}

We emphasize that the thermodynamic distribution of the Lee-Yang zeros with periodic choice of $\set{J_j}$ of low period has been investigated numerically (e.g. \cite{BGP95}), and rigorously (e.g. \cite{BG01}).



As a concluding remark, we point out another advantage of our approach.  It has been repeatedly asked in the physics literature  whether one can have a notion of gap-labeling for the Ising model (see, e.g., \cite[pp.\ 288--290]{BGP95} and  \cite[p.\ 858]{BG01}).  In our approach, we may use the relationship between the Lee-Yang zeros of the partition function and the spectra of CMV matrices to do just that.  In particular, \cite[Theorem 5.6]{GJ96} establishes a gap-labeling theorem for CMV operators \`{a} la the celebrated theorem of Johnson for differential operators modelling single-particle systems in \cite{J}.  Of course, \cite{GJ96} does not phrase statements in terms of CMV operators, since the term ``CMV matrix'' had not yet been coined.  We summarize the main ingredients of Geronimo-Johnson's approach below.\footnote{We do this primarily for the benefit of the reader who is interested in the Ising model and who is not familiar with OPUC theory, and especially the OPUC theory for dynamically defined Verblunsky coefficients.}

Let $\Omega$ be a compact metric space, $T: \Omega \to \Omega$ a homeomorphism, $\mu$ a $T$-ergodic measure, $f \in C(\Omega,\D)$ a sampling function, $(\E_{\omega})_{\omega \in \Omega}$ the associated dynamically defined family of extended CMV matrices (i.e, with Verblunsky coefficients $\alpha_n(\omega) = f(T^n \omega)$), and $\Sigma_\mu$ the $\mu$-almost sure spectrum thereof. The associated one-parameter family of cocycles is $(T,A_z)(\omega, v) = (T \omega, A(f(\omega),z))$, with iterates $(T,A_z)^n = (T^n,A_z^n)$ for suitable $A_z^n : \Omega \to U(1,1)$, and with associated Lyapunov exponent
$$
L(z) = \lim_{n \to \infty} \frac 1n \int \log \| A_z^n(\omega)\| \, d\mu(\omega).
$$
The rotation number $\rho(z)$ is (essentially) given by (radial boundary values of) the harmonic conjugate of the Lyapunov exponent. Namely, there exist a constant $R$ and a measure $dk$ on $\partial \D$ such that
$$
L(z) = R + \int_{\partial \D} \log |z - w| \, dk(w);
$$
compare \cite[Theorem~2.4]{GJ96}. Then, with the Floquet exponent
$$
w(z) = R + \int_{\partial \D} \log (z - w) \, dk(w),
$$
the rotation number $\rho(z)$ is defined via the identity
$$
w(z) = L(z) + i \, \frac{\rho(z) + \mathrm{Arg} \, z}{2},
$$
initially for $|z| < 1$ and then by taking radial boundary values; compare the discussion on \cite[pp.151--152]{GJ96}.

One can show under mild additional assumptions on $T$ and/or $f$ that $\rho$ is constant precisely in gaps, i.e., in subarcs of $\partial \D \setminus \Sigma$; compare \cite[Theorem~2.6]{GJ96} and \cite[Corollary~5.5]{GJ96}.

This shows that a gap of $\Sigma$ may be naturally labeled by the (constant) value $\rho$ takes on this gap. It is therefore of interest to ask about the possible values one can get in this way, and interestingly enough, this set of possible values may be determined solely from the base dynamics (i.e., it is the same for all $f$'s). This is the content of \cite[Theorem~5.6]{GJ96}:

\begin{theorem}
There is a countable subgroup $\mathfrak{M} \subseteq \R$, which depends only on $\Omega$ and $T$, with the property that $2 \rho(z) \in \mathfrak{M}$ for each $z \in \partial \D \setminus \Sigma_\mu$.
\end{theorem}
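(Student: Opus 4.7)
The plan is to establish this as a direct application of the Geronimo--Johnson gap-labeling framework from \cite[\S 5]{GJ96}. The conceptual content is that in a gap $\rho(z)$ is rigidly determined by the homotopy type of a continuous invariant section supplied by uniform hyperbolicity, and the set of possible such homotopy classes depends only on the topology of the base $(\Omega, T)$, not on the sampling function $f$ or the parameter $z$.

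First, I would exploit the fact that for $z \in \partial \D \setminus \Sigma_\mu$ the cocycle $(T, A_z)$ is uniformly hyperbolic, so by Theorem~\ref{t.uh}(b) there exist continuous invariant line fields $\Lambda_z^s, \Lambda_z^u : \Omega \to \C\PP^1$. Normalizing $\Lambda_z^s$ to a continuous unit section and tracking the phase picked up under one application of $A_z$ produces a continuous cocycle $\theta_z : \Omega \to \R / 2\pi \Z$ whose $\mu$-ergodic average, by Birkhoff's theorem and the identification of $\rho(z)$ with a mean rotation rate of the cocycle, equals $2\rho(z)$ modulo $2\pi$.

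Second, I would define $\mathfrak{M}$ as the Schwartzman group associated to $(\Omega, T, \mu)$: the image of the homomorphism $\check H^1(\Omega, \Z) \to \R$ obtained by lifting continuous maps $\Omega \to S^1$ along $T$-orbits and averaging the displacement against $\mu$. Because $\Omega$ is a compact metric space, $\check H^1(\Omega, \Z)$ is countable, so $\mathfrak{M}$ is automatically a countable subgroup of $\R$; its construction makes no reference to $f$ or $z$. I would then verify that the homotopy class of $\theta_z$ lies in the image of $\check H^1(\Omega, \Z)$ and that its Schwartzman average is exactly $2\rho(z)$.

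The main obstacle is precisely this last identification. The rotation number in the excerpt is defined analytically, as a radial boundary value of the imaginary part of the harmonic conjugate of $L(z)$; matching this analytic formula with the topological Schwartzman winding number of $\theta_z$ is the core technical content of \cite[\S 5]{GJ96}. The bridge comes from the integral representation $L(z) = R + \int_{\partial \D} \log |z - w| \, dk(w)$ already recorded in the excerpt, which lets one compute the increments of $w(z)$ across the gap and compare them with the increments of the Schwartzman winding along $T$-orbits. Once this identification is in hand, the conclusion $2\rho(z) \in \mathfrak{M}$ is immediate, and countability and independence of $f$ and $z$ are built into the construction of $\mathfrak{M}$.
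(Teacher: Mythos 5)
The central problem is your construction of $\mathfrak{M}$. You define it as the image of $\check H^1(\Omega,\Z)$ --- the \v{C}ech cohomology of the \emph{base} --- under a displacement-averaging homomorphism. The group in this theorem (which the paper does not prove at all, but quotes verbatim from \cite[Theorem~5.6]{GJ96}) is instead the image under the Schwartzman homomorphism of $\check H^1(\widehat{\Omega})$, where $\widehat{\Omega}=(\Omega\times\R)/\Z$ is the \emph{suspension} of $(\Omega,T)$. The difference is not cosmetic. Via the Wang sequence for the mapping torus, $\check H^1(\widehat{\Omega},\Z)$ contains, besides the ($T^*$-invariant) classes pulled back from $\check H^1(\Omega,\Z)$, the coinvariants $C(\Omega,\Z)/\{g-g\circ T\}$, whose Schwartzman image is $\{\int f\,d\mu : f\in C(\Omega,\Z)\}$; your construction captures only the first contribution and misses the second entirely. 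But in the situations this paper is actually concerned with, the second contribution is everything: if $\Omega$ is a single periodic orbit of period $p$, or a Cantor set such as $(\supp\nu)^\Z$ with $\nu$ finitely supported, then $\check H^1(\Omega,\Z)=0$, so your $\mathfrak{M}$ is trivial, whereas genuine nonzero gap labels occur (labels $j/p$ in the periodic case, and the group generated by $\mu$-measures of cylinder sets in the Bernoulli case). With $\mathfrak{M}$ as you define it the theorem is false; the suspension (equivalently, adjoining the integrals of integer-valued continuous functions on $\Omega$) is exactly what makes it true. Your countability argument survives, since $\widehat{\Omega}$ is again compact metric, but it must be applied there.

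Two lesser issues. Normalizing $\Lambda^s_z$ to a continuous unit section presupposes that the invariant line bundle over $\Omega$ is trivial; \cite{GJ96} avoid this by tracking the argument through the induced projective action rather than a global section. Also, uniform hyperbolicity for every $z\in\partial\D\setminus\Sigma_\mu$ in the generality of this theorem is not Theorem~\ref{t.uh}, which concerns only the full shift over $[A,B]$ and $z\in R_A$; in general one needs \cite[Theorems~2.6 and 5.1]{GJ96}. Finally, note that your outline defers the decisive step --- identifying the analytically defined $\rho(z)$ (boundary values of the harmonic conjugate of $L$) with the Schwartzman winding of the invariant direction --- to \cite{GJ96}, just as the paper defers the whole statement; so the part of the argument you actually supply is the construction of $\mathfrak{M}$, and that is the part that is wrong as written.
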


Let us briefly describe how the countable subgroup $\mathfrak{M} \subseteq \R$ containing all gap labels may be obtained from the base dynamics. Recall that the suspension of $(\Omega,T)$ is defined by $ \widehat{\Omega} =( \Omega \times \R) / \Z $, where the action of $\Z$ on $\Omega \times \R$ is given by $ n \cdot (\omega,t) = (T^n \omega, t-n) $.  Denoting the equivalence class of $(\omega,t)$ by $[\omega,t]$, we observe that $\widehat{\Omega}$ is equipped with natural flow
$$
T^s[\omega,t] = [\omega,t+s].
$$
Thus, $\widehat{\Omega}$ is a fiber bundle with fibers homeomorphic to $\Omega$ endowed with a natural $\R$-flow in such a way that the action of the time-one map on the fibers coincides with the action of $T$ on $\Omega$.  One obtains a natural measure $\widehat{\mu}$ on $\widehat{\Omega}$ by averaging pushforwards of $\mu$ -- more precisely,
$$
\int_{\widehat{\Omega}} \! f[\omega,t] \, d\widehat{\mu}[\omega,t]
=
\int_0^1 \! \int_{\Omega} \! f[\omega,t] \, d\mu(\omega) \, dt.
$$
The subgroup $\mathfrak{M}$ arises from the approach of Schwarzmann applied to $\widehat{\Omega}$, which we briefly summarize here.  The interested reader is referred to \cite{Schwarz} for a more detailed exposition.  Let $H = \check{H}^1(\widehat{\Omega},\R)$ denote the first \v{C}ech cohomology group of $\widehat{\Omega}$ with real coefficients.  Any $c \in H$ enjoys a continuous representative of the form $f_c:\widehat{\Omega} \to \partial \D$.  One then defines the Schwartzman homomorphism by
\begin{equation}\label{def:schwarz:hom}
\phi(c)
=
\lim_{t \to \infty} \frac{\arg(f_c(T^t \hat \omega)}{t}.
\end{equation}
One of course needs to check that the limit exists and is independent of the chosen representative.  It turns out that the limit in \eqref{def:schwarz:hom} exists for $\widehat{\mu}$ almost every $\widehat{\omega} \in \widehat{\Omega}$, that it is $\widehat{\mu}$-almost surely constant, and that it does not depend on the choice of representative.  The group $ \mathfrak{M} $ is then precisely the image of $H$ under the map $\phi$.

By way of an example, if $\Omega$ is a compact abelian group and $T:\omega \mapsto \omega + \alpha$ is a minimal translation thereof, then the associated $\alpha_n$'s  will be almost-periodic.  It turns out that one can canonically identify $\mathfrak{M}$ with the frequency module of these almost-periodic sequences in this case, so one recovers the standard gap-labeling theorem for almost-periodic operators for free.  See \cite[Example 5.10]{J} for more details on this connection.

\end{document}